\newcommand{\calC}{\mathcal{C}}
\newcommand{\PhiLoc}{\Phi_\locations}
\newcommand{\state}{s}
\newcommand{\startingstate}{\state_0}
\newcommand{\states}{S}
\newcommand{\clocks}{\calC}
\newcommand{\clock}{c}
\newcommand{\const}{\mathbf{K}}
\newcommand{\y}{\wedge}
\newcommand{\location}{\ell}
\newcommand{\startinglocation}{\ell_0}
\newcommand{\locations}{L}
\newcommand{\Act}{\mathit{Act}}
\newcommand{\Acti}{\inputlabs}
\newcommand{\Acto}{\outlabs}
\newcommand{\project}[1]{[#1]\downarrow}
\newcommand{\ioco}{\textbf{ioco}\xspace}
\newcommand{\nioco}{\ \not\!\!\!\!\!\ioco}
\newcommand{\tioco}{\textbf{tioco}\xspace}
\newcommand{\pass}{\textbf{pass}\xspace}
\newcommand{\fail}{\textbf{fail}\xspace}
\newcommand{\test}{\textbf{t}\xspace}
\newcommand{\atest}{\test}
\newcommand{\testsuite}{\textbf{T}\xspace}
\newcommand{\testLTS}{\testsuite_{\mathit{LTS}}\xspace}
\newcommand{\atestLTS}{\testsuite_{\mathit{LTS}}\xspace}
\newcommand{\testTA}{\test_{\mathit{TA}}\xspace}
\newcommand{\atestTA}{\testsuite_{\mathit{TA}}\xspace}
\newcommand{\trans}{\rightarrow}
\newcommand{\Trans}{\Rightarrow}
\newcommand{\transition}[1]{\xrightarrow{#1}}
\newcommand{\ntransition}[1]{\not\xrightarrow{#1}}
\newcommand{\lntransition}[1]{\not\!\!\!\xrightarrow{#1}}
\newcommand{\ltstuple}{\langle \states, \Act, \trans, \startingstate \rangle}
\newcommand{\tatuple}{\langle \locations, \Act, \PhiLoc, \calC, \rightarrow, \location_0\rangle}
\newcommand{\taification}{\chi^{\scriptstyle M}\!}
\definecolor{lightgreen}{RGB}{224,255,224}
\definecolor{darkgreen}{RGB}{0,128,0}
\definecolor{lightblue}{RGB}{200,200,255}
\definecolor{darkblue}{RGB}{56,56,255}
\definecolor{lightred}{RGB}{200,200,255}
\definecolor{darkred}{RGB}{139,0,0}
\newcommand{\RRplus}{\mathbb{R}_{>0}}
\newcommand{\RRnonzero}{\ensuremath{\mathbb{R}_{\geq 0}}}
\newcommand{\labs}{\Act}
\newcommand{\Actq}{\Act^\delta}
\newcommand{\qAct}{\Actq}
\newcommand{\outlabs}{\labs_O}
\newcommand{\outlabsdelta}{\labs_O^\delta}
\newcommand{\olabsd}{\outlabsdelta}
\newcommand{\Actoq}{\olabsd}
\newcommand{\inputlabs}{\labs_I}
\newcommand{\trace}{\sigma}
\newcommand{\ttrace}{\rho}
\newcommand{\system}{\mathcal{A}}
\newcommand{\sys}{\system}
\newcommand{\systemb}{\mathcal{B}}
\newcommand{\systemc}{\mathcal{C}}
\newcommand{\systemd}{\mathcal{D}}
\newcommand{\systemTA}{\mathscr{A}}
\newcommand{\sysTA}{\systemTA}
\newcommand{\impl}{\system_I}
\newcommand{\implTA}{\systemTA_I}
\newcommand{\spec}{\system_S}
\newcommand{\specTA}{\systemTA_S}
\newcommand{\qsysb}{\systemb^\delta}
\newcommand{\action}{a}
\newcommand{\iaction}{\inputlab}
\newcommand{\oaction}{\outlab}
\newcommand{\ooaction}{\ooutlab}
\newcommand{\inputlab}{i}
\newcommand{\iinputlab}{i$'$}
\newcommand{\outlab}{o}
\newcommand{\ooutlab}{o$'$}
\newcommand{\traces}{\mathit{traces}}
\newcommand{\ttraces}{\mathit{ttraces}}
\newcommand{\Straces}{\mathit{Straces}}
\newcommand{\after}{\;\textbf{after}\;}
\newcommand{\afterM}{\;\mathbf{after}_M\;}
\newcommand{\iocorel}{\mathbf{ioco}}
\newcommand{\tiocorel}{\mathbf{tioco}}
\newcommand{\tiocoM}{\mathbf{tioco_M}}
\newcommand{\out}{{\bf{out}}}
\newcommand{\inp}{{\bf{in}}}
\newcommand{\va}[1]{\overset{#1}{\rightarrow}}
\newcommand{\nva}[1]{\overset{#1}{\not\rightarrow}}
\newcommand{\vaa}[0]{\rightarrow}
\newcommand{\RRo}{\RRnonzero}
\newcommand{\vva}[1]{\overset{#1}{\longrightarrow}}
\newcommand{\outM}{{\bf{out}}_M}
\newcommand{\SttracesM}{\mathit{Sttraces_M}}
\newcommand{\ttracesM}{\mathit{
ttraces_M}}
\newcommand{\Sttraces}{\mathit{Sttraces}}
\newcommand{\tim}{{\mathit{d}}}
\newcommand{\tq}{{\ |\ }}
\newcommand{\al}[1]{^{#1}}
\newcommand{\tikztrans}[3]{
  \begin{tikzpicture}[node distance=2cm]
    \tikzset{state/.style= {draw, circle}} 
     \node[state] (0) {};
     \node (1) [right of=0] {};
     \path[->]
    (0) edge node [above] {#1} node [below, text width=12mm, align=center] {#2 #3} (1)
    ;
    \end{tikzpicture}}
\newcommand{\tikztransinv}[4]{
  \begin{tikzpicture}[node distance=2cm]
    \tikzset{timedstate/.style= {draw, rounded corners}}
     \node[timedstate] (0) {#4};
     \node (1) [right of=0] {};
     \path[->]
    (0) edge node [above] {#1} node [below, text width=12mm, align=center] {#2 #3} (1)
    ;
    \end{tikzpicture}}
\title{Time for Quiescence:\\
Modelling quiescent behaviour in testing via time-outs in timed automata}
\titlerunning{Time for Quiescence}
\author{
Laura Brand\'{a}n Briones \inst{1} \orcidID{0009-0001-7402-0077} \and\\
Marcus Gerhold \inst{2} \orcidID{0000-0002-2655-9617} \and 
Petra van den Bos  \inst{2} \orcidID{0000-0002-9212-1525} \and 
Mari{\"e}lle Stoelinga  \inst{2,3} \orcidID{0000-0001-6793-8165} \\
}
\authorrunning{L. Brand\'{a}n Briones et al.}
\institute{
Universidad Nacional de C\'ordoba, C\'ordoba, Argentina \email{laura.brandan@unc.edu.ar} \and
University of Twente, Enschede, The Netherlands 
\email{\{m.gerhold,p.vandenbos,m.i.a.stoelinga\}@utwente.nl} \and
Radboud University, Nijmegen, the Netherlands
}
\begin{document}
\maketitle
\begin{abstract}
    Model-based testing (MBT) derives test suites from a behavioural specification of the system under test. 
    In practice, engineers favour simple models, such as labelled transition systems (LTSs). 
    However, to deal with \emph{quiescence}---the absence of observable output---in practice, a time-out needs to be set to conclude observation of quiescence.
    Timed MBT exists, but it typically relies on the full 
    arsenal of timed automata (TA).
    
    We present a lifting operator $\taification$ that adds timing without the TA overhead: given an LTS, $\taification$ introduces a single clock for a user chosen time bound $M>0$ to declare quiescence. 
    In the timed automaton, the clock is used to model that outputs should happen before the clock reaches value $M$, while quiescence occurs exactly at time $M$.
    This way we provide a formal basis for the industrial practice of choosing a time-out to conclude quiescence.
    Our contributions are threefold:
    (1) an implementation conforms under $\iocorel$ if and only if its lifted version conforms under timed $\tiocoM$ (2) applying $\taification$ before or after the standard \ioco test-generation algorithm yields the same set of tests, and (3) the lifted TA test suite and the original LTS test suite deliver identical verdicts for every implementation.

\end{abstract}
\section{Introduction}

\paragraph{Model-based testing.} Model-based testing is an effective way of testing, by providing automated test generation, execution, and evaluation. Central in model-based testing is a system specification model $\spec$ that pins down exactly the desired system behaviour. 
Test cases are derived automatically from $\spec$ and executed automatically against the system under test (SUT). Then, a test verdict (pass/fail) is given: if running a test case against the SUT exhibits a behaviour that adheres to $\spec$, the verdict pass is given; otherwise, a fail verdict is given.

\paragraph{The input-output conformance framework.}
Mathematical correctness in MBT is expressed through formal conformance relations that compare an implementation with its specification.
A well-studied instance is \emph{input–output conformance} (\ioco)~\cite{T2008}. In this paper, we build on $\ioco$ theory, as it is a widely recognized model-based testing framework applied in both academic research and industrial practice.
Ioco-theory provides a rigorous mathematical underpinning of model-based testing, in terms of {\em soundness} (ensuring that all emitted test verdicts are correct) and {\em completeness} (ensuring that the test generation algorithms have no inherent blind spots, i.e. all potential non-conformities can in principle be found by the test generation algorithms).

More precisely, the $\ioco$-conformance relation pins down exactly when an implementation model $\impl$ correctly implements a specification model $\spec$, where both $\impl$ and $\spec$ are given as labelled transition systems, under the required assumptions. 

\paragraph{Quiescence.} An intricate aspect in model-based testing is the handling of {\em quiescence}: what happens if the SUT does not provide any output? If quiescent behaviour (i.e. absence of output) is allowed, then the test should lead to the verdict pass. Otherwise, if  quiescence is disallowed, the test should yield the verdict fail. 
To answer this question, the system specification must be augmented with quiescence information. 
To answer this question, the observation semantics of the specification must make quiescence explicit. In ioco-theory, this can be done via \emph{suspension-traces}: a trace records the special action $\delta$ whenever it reaches a state that has no outgoing outputs.

In industrial practice, quiescence is handled with a time-out: if no output is observed within a fixed bound $M$, the system is deemed quiescent and no further reaction is expected.
These time-outs can be naturally modelled as a timed automaton. Interestingly, several timed variants of \ioco have been proposed, including tioco~\cite{LMN2004}, rtioco~\cite{KT2004}, and $\tiocoM$~\cite{BBB04}.
In this paper we show that the two frameworks of \ioco  and $\tiocoM$ are closely related.

That is, if we model quiescence via a time-out in a timed automaton, then conformance is preserved. 
More technically, we translate each labelled transition system $\system$ into a timed automaton $\taification(\system)$ which ensures that each output occurs before $M$ time units. 
Then we show that if an LTS $\impl$ conforms to a specification $\spec$ (i.e. $\impl\ \ioco\ \spec$), 
then this is also the case after transforming these into timed automata, 
via the timed conformance relations (i.e. $\taification(\impl) \; \tioco_M \; \taification(\spec)$). Note that the  $\tioco_M$-relation is parametrized by the time-out constant $M$.

Additionally, we show that the correspondence between LTS and TA also holds on the level of test cases: since test cases are LTSs the operation $\taification$ can also be applied to them in such a way that applying $\test$ to $\impl$ yields the same verdict as applying $\taification(\test)$ to $\taification(\impl)$.
Moreover, we show that $\taification$ commutes with test generation: applying $\taification$ before or after the standard ioco
test-generation algorithm yields the same set of tests.

We note that with our theory, practitioners only need to model the system’s input–output behaviour as an LTS, and can then apply the transformation to automatically obtain a timed automaton with appropriate time-outs. Practitioners thus do not need to worry about how to implement time-outs: the transformation ensures that outputs are only allowed before the time-out, that inputs are provided before the time-out as well (no unnecessary waiting needed), and that quiescence is concluded directly after waiting $M$ time units.

\paragraph{Paper overview:}
\autoref{sec:background} recalls prerequisites on labelled transition systems and \ioco, and~\autoref{sect:ta} the necessary theory of timed automata and $\tiocoM$.
In \autoref{sec:transformation} we define the transformation rules that translates from LTS into a TA  and prove preservation of conformance from \ioco to $\tiocoM$ \emph{(contribution 1)}.
In \autoref{sec:testing} we define test cases for LTS and TA, and prove commutativity of $\taification$ when applied on LTS and all its test cases \emph{(contribution 2)}, and that an implementation that fails a test case in the LTS setting will also do so in the TA setting \emph{(contribution 3)}.
Lastly, we discuss related work in \autoref{sec:related-work} and conclude the paper in \autoref{sec:conclusion}.

\section{Labelled Transition Systems and $\ioco$}
\label{sec:background}

Labelled transition systems (LTSs) are standard transition systems with labelled transitions and states. For the purposes of model-based testing, LTSs specify the behaviour of a System Under
Test (SUT). Here it is common to distinguish between inputs labels and output labels of the system explicitly. Input labels denote actions provided to the SUT, and output labels denote output actions performed by the SUT. For simpler definitions and notations, we choose to leave the internal, unobservable $\tau$ label out of scope of this paper.

\begin{definition}[Labelled Transition System]
\label{def:lts}
A labelled transition system (LTS) is a 4-tuple $\system = \ltstuple$, where $\states$ is a finite set of states with a unique starting state $\startingstate\in \states$, $\Act$ is the finite set of actions partitioned into input and output actions, i.e. $\Act= \Acti\sqcup\Acto$, and $\trans\ \subseteq\states\times\Act\times\states$ is the transition relation.

\begin{itemize}\itemsep0mm 
    \item We write $\state\transition{\action}\state'$ for $(\state,\action,\state')\in\ \trans$, and $\state\transition{\action}$ if $\state\transition{\action}\state'$ for some $\state'\in \states$ and   $\state\ntransition{\action}$ if no such $\state'$ exists
    
    \item If $\trace = \action_1\ldots \action_n$ for $\action_1,\ldots,\action_n\in\Act$, we write $\state\transition{\trace} \state'$ if there are states $\state_1, \ldots, \state_{n-1}\in\states$ such that $\state\transition{\action_1} \state_1 \ldots \state_{n-1}\transition{\action_n} \state'$. We call $\trace\in \Act^*$ a trace
    
    \item We write $\traces(\state) = \{\trace\in\Act^* \mid\ \state\transition{\trace}\}$ and $\traces(\system)=\traces(\startingstate)$
    
    \item We let $\sqsubseteq$ denote the prefix relation on traces.  If $\trace,\trace'\in \Act^*$, $ \sigma =\action_1\ldots\action_n$ and            $\trace'=\action_1\ldots\action_i$ for some $i\leq n$ we write $\trace'\sqsubseteq\trace$ to denote $\sigma'$ as subtrace of $\trace$.

        \end{itemize}
\end{definition}

Throughout the paper, we use the suffix $-?$ for inputs and $-!$ for outputs, but these suffixes are not technically part of the label itself.
Also, we use $\inputlab?$ and $\iinputlab?$ to denote inputs, and $\outlab!$ and $\ooutlab!$ for outputs.

Besides using an LTS as a specification of expected behaviour for model-based testing, we assume that the actual behaviour of the SUT, the implementation, can be represented as an LTS as well. 
In \ioco theory, the implementation is required to accept all input at all times, i.e. each of its states is \emph{input-enabled}.
We call an input-enabled LTS an \emph{input-output transition system} (IOTS).
\begin{definition}[Input-Output Transition System]
\label{def:iots}
An input-output transition system (IOTS) is an input-enabled LTS, i.e. 
for all $\inputlab? \in \Acti$ and all $\state \in \states$, we have $\state\transition{\inputlab?}$.

\end{definition}
\begin{example}
    \begin{figure}
    \hspace{-1cm}
    \centering
    \begin{subfigure}[b]{0.45\textwidth}
    \begin{tikzpicture}
    \tikzset{state/.style= {draw, circle}} 
    \tikzset{timedstate/.style= {draw, rounded corners}}
    \node[state] (s0) {};
    \node[above left of=s0] (init) {};
    \node[state, below right of=s0, xshift=0.75cm] (s1) {};
        \node[state, right of=s1, xshift=0.75cm] (s3) {};
    \node[state, above right of=s0, xshift=0.75cm] (s2) {};
        \node[state, right of=s2, xshift=0.75cm] (s4) {};
            \node[state, below right of=s4, xshift=0.75cm] (s5) {};
            \node[state, above right of=s4, xshift=0.75cm] (s6) {};

    \draw[->] (init) to (s0);
    \draw[->] (s0) to node[below left, xshift=0.1cm] {\iaction?} (s1);
    \draw[->] (s0) to node[above left, xshift=0.1cm] {\iaction?} (s2);
    \draw[->] (s1) to node[below] {\oaction!} (s3);
    \draw[->] (s2) to node[above] {\iaction?} (s4);
    \draw[->] (s4) to node[below left, xshift=0.3cm] {\ooaction!}(s5);
    \draw[->] (s4) to node[above left, xshift=0.2cm] {\oaction!} (s6);
    
    \end{tikzpicture}
    \subcaption{LTS $\system$}
    \end{subfigure}
    \quad\quad
    \begin{subfigure}[b]{0.55\textwidth}
    \begin{tikzpicture}
    \tikzset{state/.style= {draw, circle}} 
    \tikzset{timedstate/.style= {draw, rounded corners}}

    \node[state] (s0) {};
    \node[above left of=s0] (init) {};
    \node[state, below right of=s0, xshift=0.75cm] (s1) {};
        \node[state, right of=s1, xshift=0.75cm] (s3) {};
    \node[state, above right of=s0, xshift=0.75cm] (s2) {};
        \node[state, right of=s2, xshift=0.75cm] (s4) {};
            \node[state, below right of=s4, xshift=0.75cm] (s5) {};
            \node[state, above right of=s4, xshift=0.75cm] (s6) {};

    \draw[->] (init) to (s0);
    \draw[->] (s0) to node[below left, xshift=0.1cm] {\iaction?} (s1);
    \draw[->] (s0) to node[above left, xshift=0.1cm] {\iaction?} (s2);
    \draw[->] (s1) to node[below] {\oaction!} (s3);
    \draw[->] (s2) to node[above] {\iaction?} (s4);
    \draw[->] (s4) to node[below left, xshift=0.2cm] {\ooaction!}(s5);
    \draw[->] (s4) to node[above left, xshift=0.3cm] {\oaction!} (s6);

    \draw[->, red, densely dashed] (s1) to [loop above] node {\iaction?} (s2);
    \draw[->, red, densely dashed] (s3) to [loop above] node {\iaction?} (s3);
    \draw[->, red, densely dashed] (s4) to [loop above] node {\iaction?} (s4);
    \draw[->, red, densely dashed] (s5) to [loop above] node {\iaction?} (s5);
    \draw[->, red, densely dashed] (s6) to [loop above] node {\iaction?} (s6);
    \end{tikzpicture}
    \subcaption{IOTS $\systemb$}
    \end{subfigure}
    \caption{\label{fig:lts-iots}
        LTS $\system = \langle\states,\{\iaction?,\oaction!,\ooaction!\},\rightarrow_\system,s_0\rangle$ and IOTS $\systemb = \langle\states,\{\iaction?,\oaction!,\ooaction!\},\rightarrow_\systemb,s_0\rangle$. 
    }
\end{figure}

    Figure~\ref{fig:lts-iots} shows an example LTS $\system$, and IOTS $\systemb$, such that $\systemb$ is an input-enabled version of $\system$.
    The newly added input transitions are highlighted.
\end{example}
Traces represent the \emph{visible} behaviour of a labelled transition system.
A state that does not have any outgoing output transitions is \emph{quiescent}.
Quiescent behaviour of a state is made explicit with action $\delta$ to the same state. 

\begin{definition}[Quiescence]
    A state $\state \in \states$ is \emph{quiescent} iff $\forall\ \outlab! \in \Acto: \state\ntransition{\outlab!}$.
    We write $\qAct = \Act \cup \{\delta\}$, and $\outlabsdelta = \outlabs\cup\{\delta\}$.
\end{definition}

\begin{example}
    Figure~\ref{fig:delta-iots} shows $\systemb$, the IOTS from \autoref{fig:lts-iots}(b), where the quiescent states are highlighted by $\delta$-self-loops.
    \begin{figure}
    \centering
    \begin{tikzpicture}
    \tikzset{state/.style= {draw, circle}} 
    \tikzset{timedstate/.style= {draw, rounded corners}}

    \node[state] (s0) {};
    \node[above left of=s0] (init) {};
    \node[state, below right of=s0, xshift=0.75cm] (s1) {};
        \node[state, right of=s1, xshift=0.75cm] (s3) {};
    \node[state, above right of=s0, xshift=0.75cm] (s2) {};
        \node[state, right of=s2, xshift=0.75cm] (s4) {};
            \node[state, below right of=s4, xshift=0.75cm] (s5) {};
            \node[state, above right of=s4, xshift=0.75cm] (s6) {};

    \draw[->] (init) to (s0);
    \draw[->] (s0) to node[below left] {\iaction?} (s1);
    \draw[->] (s0) to node[above left] {\iaction?} (s2);
    \draw[->] (s1) to node[below] {\oaction!} (s3);
    \draw[->] (s2) to node[above] {\iaction?} (s4);
    \draw[->] (s4) to [loop above] node {\iaction?} (s4);
    \draw[->] (s4) to node[below left] {\oaction!}(s5);
    \draw[->] (s4) to node[above left] {\ooaction!} (s6);
    
    \draw[->, red, densely dashed] (s0) to [loop above] node {$\delta$} (s0);
    \draw[->] (s1) to [loop above] node {\iaction?} (s2);
    \draw[->, red, densely dashed] (s3) to [loop right] node {$\delta$} (s3);
    \draw[->] (s3) to [loop above] node {\iaction?} (s3);
    \draw[->, red, densely dashed] (s2) to [loop above] node {$\delta$} (s2);
    \draw[->, red, densely dashed] (s5) to [loop right] node {$\delta$} (s5);
    \draw[->, red, densely dashed] (s6) to [loop right] node {$\delta$} (s6);
    \draw[->] (s5) to [loop above] node {\iaction?} (s5);
    \draw[->] (s6) to [loop above] node {\iaction?} (s6);
    \end{tikzpicture}
    \caption{System $\qsysb$ with its quiescent states highlighted by $\delta$-self-loops.}
    \label{fig:delta-iots}
\end{figure}
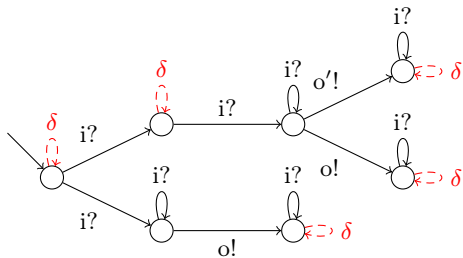

\end{example}

Suspended traces extend regular traces with the special quiescence action $\delta$ in quiescent states.
These suspended traces are then used to define the \ioco\ relation~\cite{T2008}, where $\delta$ is considered among the outputs that an LTS/IOTS may produce. \autoref{rem:ioconotation} introduces the necessary notation for the definition of \ioco\ in \autoref{def:ioco}. Since we left $\tau$ out of scope of this paper, we can define the $\after$ function directly from the transitions and quiescent states of an LTS.

\begin{definition}
[\ioco notation]
\label{rem:ioconotation}
Let $\system=\ltstuple$ be an LTS. 
Below are \ioco specific notations:
\begin{itemize}
    \item All output actions, including $\delta$, enabled in a state $\state \in \states$ are:\\
    $\out(\state)= \{\oaction \in \outlabs\tq \state \transition{\oaction}\} \cup \{\delta\tq \text{if } \state\text{ is quiescent}\}$
    
    \item All input actions enabled in a state $\state \in \states$ are: 
    $\inp(\state) = \{\iaction \in \Acti\tq \state \transition{\iaction}\}$

     \item Let $\varepsilon$ be the empty sequence, $\action \in \qAct$ an action or $\delta$, and $\trace \in (\qAct)^*$ a sequence of actions, including $\delta$. Then states after these sequences, starting in $\state \in \states$, are:
    \begin{align*}
    \state \after \varepsilon &= \{ \state \}\\
        \state \after \action &= \{\state' \mid \action \in \Act \wedge\ \state\transition{\action} \state'\} \cup \{\state \mid \action = \delta \wedge\ \state \text{ is quiescent} \}\\
        \state \after \action\,\trace &= \bigcup\ \{\state' \after \trace \mid  \state'\in \state \after \action\}
    \end{align*}

    \item The suspended traces of a state $\state \in \states$, i.e. traces including quiescence, are:\\
    $\Straces(\state) = \{\trace \in (\qAct)^* \mid \state \after \trace \neq \emptyset \}$
    \item For an LTS $\system$ we write:  $\system\after\trace=\startingstate\after\trace$, and $\Straces(\system) = \Straces(\startingstate)$.
    
\end{itemize}
\end{definition}

We are now ready to recall classic input-output conformance (\ioco) \cite{T2008}. This relation defines when an IOTS implementation conforms to an LTS specification. 
\begin{definition}[ioco]
\label{def:ioco}
Let $\impl$ be an IOTS and $\spec$ an LTS. Then
$\impl\ \ioco\ \spec$ iff
$$
\forall\ \trace \in \Straces(\spec) : \out(\impl \after \trace) \subseteq \out(\spec \after \trace).
$$

\end{definition}

\begin{example}
    Figure~\ref{fig:ioco} presents two systems, on the left we have one implementation $\systemc \in$ IOTS and on the right we have another implementation $\systemd \in$ IOTS of the specification $\system \in$ LTS presented in Figure~\ref{fig:lts-iots}(a). We can observe that $\systemc\ \ioco\ \system$ but $\systemd\ \nioco\ \system$, because of the suspended trace $\trace = \iaction? \cdot \delta \cdot \iaction?$, where $\delta \in \out(\systemd \after \trace)$ but $\delta \not\in \out(\system \after \trace)$. 
    \begin{figure}
    \hspace{-1cm}
    \centering
    \begin{subfigure}[b]{0.45\textwidth}
    \begin{tikzpicture}
    \tikzset{state/.style= {draw, circle}} 
    \tikzset{timedstate/.style= {draw, rounded corners}}
    \node[state] (s0) {};
    \node[above left of=s0] (init) {};
    \node[state, below right of=s0, xshift=0.75cm] (s1) {};
        \node[state, right of=s1, xshift=0.75cm] (s3) {};
    \node[state, above right of=s0, xshift=0.75cm] (s2) {};
        \node[state, right of=s2, xshift=0.75cm] (s4) {};
            \node[state, above right of=s4, xshift=0.75cm] (s6) {};

    \draw[->] (init) to (s0);
    \draw[->] (s0) to node[below left, xshift=0.1cm] {\iaction?} (s1);
    \draw[->] (s0) to node[above left, xshift=0.1cm] {\iaction?} (s2);
    \draw[->] (s1) to node[below] {\oaction!} (s3);
    \draw[->] (s2) to node[above] {\iaction?} (s4);
    \draw[->] (s4) to node[above left, xshift=0.2cm] {\oaction!} (s6);
    \draw[->] (s1) to [loop above] node {\iaction?} (s1);
    \draw[->] (s3) to [loop above] node {\iaction?} (s3);
    \draw[->] (s4) to [loop above] node {\iaction?} (s4);    
    \draw[->] (s6) to [loop above] node {\iaction?} (s6);    
    
    \end{tikzpicture}
    \subcaption{IOTS $\systemc$}
    \end{subfigure}
    \quad\quad
    \begin{subfigure}[b]{0.55\textwidth}
    \begin{tikzpicture}
    \tikzset{state/.style= {draw, circle}} 
    \tikzset{timedstate/.style= {draw, rounded corners}}

    \node[state] (s0) {};
    \node[above left of=s0] (init) {};
    \node[state, below right of=s0, xshift=0.75cm] (s1) {};
        \node[state, right of=s1, xshift=0.75cm] (s3) {};
    \node[state, above right of=s0, xshift=0.75cm] (s2) {};
        \node[state, right of=s2, xshift=0.75cm] (s4) {};

    \draw[->] (init) to (s0);
    \draw[->] (s0) to node[below left, xshift=0.1cm] {\iaction?} (s1);
    \draw[->] (s0) to node[above left, xshift=0.1cm] {\iaction?} (s2);
    \draw[->] (s1) to node[below] {\oaction!} (s3);
    \draw[->] (s2) to node[above] {\iaction?} (s4);
    
    \draw[->] (s1) to [loop above] node {\iaction?} (s4);
    \draw[->] (s3) to [loop above] node {\iaction?} (s3);
    \draw[->] (s4) to [loop above] node {\iaction?} (s4);
    \end{tikzpicture}
    \subcaption{IOTS $\systemd$}
    \end{subfigure}
    \caption{\label{fig:ioco}
        IOTS $\systemc = \langle\states,\{\iaction?,\oaction!,\ooaction!\},\rightarrow,s_0\rangle$ and IOTS $\systemd = \langle\states,\{\iaction?,\oaction!,\ooaction!\},\rightarrow,s_0\rangle$. 
    }
\end{figure}

\end{example}

\section{Timed Automata and $\tiocoM$}\label{sect:ta}

Inspired from~\cite{A1999} we define a timed automaton (TA) as a labelled transition system with clock
variables, clock constraints on states (now called locations), and transitions with clock guards.
Again, we split actions into inputs and outputs.
We first define notation for clock constraints.

\begin{definition}[Clock constraint]
\label{def:clockconstraint}
    The set of clock constraints over a set of clock variables $\clocks$ is $\Phi(\clocks)$, defined as in \cite{A1999}. In particular a \emph{clock constraint} $\pi \in \Phi(\clocks)$ is an element from grammar:
    $$\phi := \clock \le \const \mid \const \le \clock \mid \clock < \const \mid \const < \clock \mid \phi_1 \wedge \phi_2$$ for any clock variables $\clock \in \clocks$ and constants $\const \in \RRnonzero$.
\end{definition}
\begin{definition}[Timed Automaton]
A \emph{timed automaton} (TA) $\systemTA$ is a tuple $\tatuple$, where
\begin{itemize}\itemsep0mm
    \item $\locations$ is a finite set of locations with $\startinglocation\in \locations$ as the initial location
    
    \item $\Act$ is the finite set of action labels subdivided in input and output actions, i.e. $\Act= \Acti\sqcup\Acto$
    
    \item $\clocks$ is a finite set of clock variables
    
    \item $\PhiLoc: \locations \rightarrow \Phi(\clocks)$ is a function that maps each location $\ell\in \locations$ to some clock constraint in $\Phi(\calC)$. We call $\PhiLoc(\ell)$ the \emph{invariant} of $\ell$
    \item $\rightarrow\ \subseteq \locations\times \Act \times \Phi(\clocks)\times 2^{\clocks}\times\locations$ is a set of timed transitions.
    A timed transition $\langle \ell, \action, \phi, \lambda, \ell'\rangle \in\ \rightarrow$ represents an edge from location $\ell$ to location $\ell'$, on label $\action$. With $\phi \in \Phi(\clocks)$ we denote its clock constraint, called \emph{guard}, which specifies when the transition is enabled. The set $\lambda\subseteq \clocks$ gives the clocks to be reset, i.e. set to value 0, with this transition.
\end{itemize}
\end{definition}
As per usual, \emph{locations} in a TA differ from \emph{states}.
Commonly, the latter is only used when we talk about the TA's semantics, i.e. a location is a tuple of a state and a clock evaluation.
Consequently, a TA's semantics has uncountably many states based on the uncountably many non-negative clock valuations.
As is common, we exclude \emph{Zeno} behaviour, i.e. infinitely many actions happening in a finite amount of time.

Traces in timed automata are sequences of $(d,\action)$-tuples, denoting that $\action$ happens after the passage of $d$ time units in the previous location. The invariants and guards of the TA specify whether the transition can be taken after $d$ time.
\begin{definition}[TA Notation] 
\label{def:tanotation}
Let $\systemTA=\tatuple$ be a TA:
\begin{itemize}\itemsep0mm    
    \item We write $\location\transition{(d,\action)}\ell'$ for $(d,a) \in \RRnonzero\times\Act$ if there is a $\langle \location, \action, \phi, \lambda, \ell'\rangle\in\ \trans$ such that $\phi$ and $\PhiLoc(\location)$ are true for time $d$ that is spent between $\ell$ and $\ell'$, and such that $\PhiLoc(\ell')$ is true after updating the clocks with the resets from $\lambda$ 
    \item Timed traces are sequences of non-negative numbers and visible actions, i.e. $\ttraces(\ell) = \{\ttrace \in(\RRnonzero\times \Act)\al{*} \tq  \ell \transition{\ttrace}\}$ 
    \item Other notations on traces and subtraces carry over straightforwardly from \autoref{def:lts}.
    
\end{itemize}
\end{definition}

To define $\tiocoM$~\cite{BBB04}, the timed variant of \ioco, we introduce some notation additionally. We explicitly use a duration parameter $M\in\RRplus$ after which we declare quiescence.

\begin{definition}[$\tiocoM$\ notation] 
\label{rem:tioconotation}
Let $\systemTA=\tatuple$ be a TA, and let $M\in\RRplus$ as the time to declare quiescence explicitly:
\begin{itemize}
    \item A location $\location \in \locations$ is \emph{quiescent} iff\\\text{  } $\forall\ \tim \in \RRnonzero, \forall\ \outlab! \in \Acto: d < M \implies \location\ \lntransition{(\tim,\outlab!)}$
    \item  The outputs or inputs enabled in a location $\location$ are given as:
    \begin{align*}
    \outM(\ell)=& \{(\tim,\oaction!)\in\RRnonzero\times\Acto\tq  \location \transition{(\tim,\oaction!)}\} \cup \{(M,\delta)\tq \ell \text{ is quiescent}\}\\
    \out(\ell) =& \{\oaction! \in \Actoq \tq \exists\ (\tim, \oaction!) \in \outM(\ell) \}\\
    \inp(\ell) =& \{\iaction? \in\Acti\tq \exists\ (\tim, \iaction?) \in \RRnonzero\times\Acti \y \location\transition{(\tim,\iaction?)}\}
    \end{align*}
    \item We define locations after a sequence of tuples of a time duration $d$ and action $a$, including the tuple $(M,\delta)$ for quiescent locations, starting in location $\ell$:
    \begin{align*}
    \ell \afterM \epsilon &= \{ \ell \}\\
        \ell \afterM (\tim,\action) &= \{\ell' \mid \action \in \Act \wedge\ \ell\transition{(\tim,\action)} \ell'\} \cup \{\ell \mid (\tim,\action) = (M,\delta) \wedge\ \ell \text{ is quiescent} \}\\
        \ell \afterM (\tim,\action)\ttrace &= \bigcup\ \{\ell' \afterM \ttrace \mid  \ell'\in \ell \afterM (\tim,\action)\}
    \end{align*}
    \item We define the suspended timed traces as the traces of location $\ell$, including $\delta$ at time $M$, for quiescent locations encountered in the trace:
    $ \SttracesM(\ell) = \{\ttrace \in (\RRnonzero \times \qAct)^* \mid \ell \afterM \ttrace \neq \emptyset \}$
    \item We write:  $\systemTA\after\ttrace=\ell_0\after\ttrace$ and $\SttracesM(\systemTA) = \SttracesM(\ell_0)$.
\end{itemize}
\end{definition}
Lastly, for testing purposes, we define input enabled TAs called input-output timed automata (IOTA).
\begin{definition}[Input-Output Timed Automaton]
An input-output timed automaton (IOTA) is an input-enabled TA for parameter $M\in\RRplus$, i.e.\\
$$\forall\ d < M  \in \RRnonzero, \forall\ \iaction? \in \Acti, \forall\ \location \in \locations: \location\transition{(d,\iaction?)}.$$
\end{definition}

We will now define timed input-output conformance $\tiocoM$, as in \cite{BB07}.
Note that $\SttracesM(\specTA)$ only contains the quiescence label $\delta$ at \emph{exactly} $M$ time units.
\begin{definition}[$\tiocoM$]
\label{def:tioco}
Let $\implTA$ be an IOTA, $\specTA$ be a TA and $M \in \RRplus$, then we define $\implTA\ \tiocoM\ \specTA$ iff
$$
\forall\ \ttrace \in \SttracesM (\specTA) : \outM(\implTA \afterM \ttrace) \subseteq \outM(\specTA \afterM \ttrace).
$$
\end{definition}
\section{Transformations}
\label{sec:transformation}
We show that \ioco\ is preserved when going from  LTS/IOTS to TA/IOTA.
Below we define the transformation from an LTS to a TA.
\autoref{def:transformation} is central to our contribution: it details the conversion of an LTS into a TA in which quiescence is represented by a dedicated transition labelled~$\delta$ that becomes enabled after a time-out~$M$, whereas other output transitions are restricted to the $t < M$.

Introducing the notion of time naturally implies the addition of clocks.
Since LTSs do not inherently have a notion of time, the only clock $\clock$ that is necessary to add is the one to measure quiescence.
The parameter $M\in\RRplus$ is the explicit time when we declare a location as quiescent.
Naturally, each output transition gets a guard $\clock<M$ to ensure it is taken strictly before $M$ time units have passed.
The only enabled output at exactly $M$ time units is the quiescent transition $\delta$.
To ensure that either an output or $\delta$ is observed, either before $M$ time units have passed (when an output is observed), or when exactly $M$ time units have been observed (when quiescence is observed), we add the invariant $c \le M$ to each location.
Inputs also need to be provided strictly before $M$, because it is not needed to wait for quiescence before providing the input.
Inputs, outputs and quiescence represent the \emph{visible} behaviour of the system, and quiescence measures the time passed since the last visible progress.
Thus, like a stopwatch, the clock $\clock$ is reset on every visible action.
This closely reflects how testing of timed systems is done in practice.
Our work provides the mathematical underpinning of its correctness.

\begin{definition}[TA-ification]
\label{def:transformation}
    The TA-ification of an LTS,\\ $\system=\ltstuple$, for parameter $M\in\RRplus$ is a function $\taification$:
    $\mathit{LTS}\rightarrow\mathit{TA}$, with $\systemTA = \langle \locations, \Actq, \PhiLoc, \{\clock\}, \rightarrow_\sysTA, \state_0 \rangle$
    such that: 
    \begin{enumerate}
        \item States $S$, including initial state $s_0$, identify locations of $\systemTA$
        \item $\Actq$ are the actions labels of $\systemTA$ (i.e. all actions of $\system$ and additionally $\delta$) 
        \item $\clock$ is the unique clock of $\systemTA$ used to track quiescence      
        \item $\PhiLoc:\locations\rightarrow\Phi(\clocks)$ is a function assigning clock constraints to $\systemTA$'s locations as follows:       
        $\PhiLoc(\location) = \clock \leq M$
        
        \item $\vaa_{\sysTA}$ defines $\sysTA's$ transition relation as an extension of $\vaa$ with clock constraints and resets, as follows:
        \begin{align*}
        \vaa_{\sysTA}\ = 
            \{(\location,\action,\{\clock < M\},\{\clock\},\location') &\mid (\location,\action,\location') \in\  \vaa \cap\ (\locations \times \Act \times \locations) \}\ \cup\\
            \{(\location, \delta,\{\clock = M\},\{\clock\},\location) &\mid \location \in \locations \text{ is quiescent}\}
        \end{align*}
    \end{enumerate}
    For an LTS $\system$ we call its resulting TA, i.e. $\taification(\system) = \sysTA$, the \emph{canonic} TA of $\system$.
\end{definition}

We depict the TA-ification in \autoref{tab:taification}. It shows the TA-ification of a single transition, disregarding the other transitions of its source location.
\begin{table}[ht!]
    \renewcommand{\arraystretch}{2} 
    \setlength{\tabcolsep}{4pt}
    \centering
    \begin{tabular}{| >{\centering\arraybackslash}m{0.17\textwidth}|
                    >{\centering\arraybackslash}m{0.24\textwidth}|
                    >{\centering\arraybackslash}m{0.24\textwidth}|
                    >{\centering\arraybackslash}m{0.24\textwidth}|}
        \hline
        \textbf{LTS Transition} 
        & \tikztrans{$\iaction?$}{\phantom{guard}}{\phantom{reset}} 
        & \tikztrans{$\oaction!$}{\phantom{guard}}{\phantom{reset}} 
        & \tikztrans{$\delta$}{\phantom{guard}}{\phantom{reset}} \\
         \hline
        \textbf{TA transition after $\taification$} 
        & \tikztransinv{$\iaction?$}{$\clock < M$}{$\{\clock\}$}{$\clock \le M$}
        & \tikztransinv{$\oaction!$}{$\clock < M$}{$\{\clock\}$}{$\clock \le M$}
        & \tikztransinv{$\delta$}{$\clock = M$}{$\{\clock\}$}{$\clock \le M$} \\
        \hline
    \end{tabular}
\bigskip
    
    \caption{
    Visual representation of $\taification$ (cf. Definition~\ref{def:transformation}). 
    All locations enabling output (including quiescence $\delta$) get an invariant and their transitions get an appropriate guard to enforce that $\delta$ can \emph{only} be observed after $M$ time units.}
    \label{tab:taification}
\end{table}

\begin{example}
    Consider the LTS $\system$ and its corresponding TA $\taification(\system)$ presented in Figure~\ref{fig:lts-ta-transformation}.
    After the transformation from states to locations, each 
    state with an output-outgoing transitions was decorated with a $\clock \leq M$ invariant and all output-outgoing transitions are annotated with a $\clock<M$ guard. 
    In this manner we ensure that any visible output strictly occurs before $M$ time units. 
    \begin{figure}[ht!]
    \hspace{-1.5cm}
    \centering
    \begin{subfigure}[b]{0.45\textwidth}
    \begin{tikzpicture}
    \tikzset{state/.style= {draw, circle}} 
    \tikzset{timedstate/.style= {draw, rounded corners}}
    \node[state] (s0) {};
    \node[above left of=s0] (init) {};
    \node[state, below right of=s0, xshift=0.75cm] (s1) {};
        \node[state, right of=s1, xshift=0.75cm] (s3) {};
    \node[state, above right of=s0, xshift=0.75cm] (s2) {};
        \node[state, right of=s2, xshift=0.75cm] (s4) {};
            \node[state, below right of=s4, xshift=0.75cm] (s5) {};
            \node[state, above right of=s4, xshift=0.75cm] (s6) {};

    \draw[->] (init) to (s0);
    \draw[->] (s0) to node[below left, xshift=0.2cm] {\iaction?} (s1);
    \draw[->] (s0) to node[above left, xshift=0.2cm] {\iaction?} (s2);
    \draw[->] (s1) to node[below] {\oaction!} (s3);
    \draw[->] (s2) to node[above] {\iaction?} (s4);
    \draw[->] (s4) to node[below left, xshift=0.2cm] {\ooaction!}(s5);
    \draw[->] (s4) to node[above left, xshift=0.2cm] {\oaction!} (s6);
    
    \draw[->] (s0) to [loop above] node {$\delta$} (s0);
    \draw[->] (s2) to [loop above] node {$\delta$} (s2);
    \draw[->] (s3) to [loop below] node {$\delta$} (s3);
    \draw[->] (s5) to [loop above] node {$\delta$} (s5);
    \draw[->] (s6) to [loop above] node {$\delta$} (s6);
    \end{tikzpicture}

    \subcaption{LTS $\system$}
    \end{subfigure}
    \quad\quad
    \begin{subfigure}[b]{0.45\textwidth}
    \begin{tikzpicture}
    \tikzset{state/.style= {draw, circle}} 
    \tikzset{timedstate/.style= {draw, rounded corners}}
    \node[timedstate, below right of=s0, xshift=0.5cm] (s0) {\scriptsize{$\clock\!\leq\!M$}};
    \node[above left of=s0] (init) {};
    
    \node[timedstate, below right of=s0, xshift=1cm] (s1) {\scriptsize{$\clock\!\leq\!M$}};
    \node[timedstate, right of=s1, xshift=1cm] (s3) {\scriptsize{$\clock\!\leq\!M$}};
    \node[timedstate, above right of=s0, xshift=1cm] (s2) {\scriptsize{$\clock\!\leq\!M$}};
    \node[timedstate, right of=s2, xshift=1cm] (s4) {\scriptsize{$\clock\!\leq\!M$}};
    \node[timedstate, below right of=s4, xshift=0.75cm] (s5) {\scriptsize{$\clock\!\leq\!M$}};
    \node[timedstate, above right of=s4, xshift=0.75cm] (s6) {\scriptsize{$\clock\!\leq\!M$}};

    \draw[->] (init) to (s0);
    \draw[->] (s0) to node[below left, xshift=0.25cm] {\tiny{\iaction?,$\clock\!\!<\!\!M$}} (s1);
    \draw[->] (s0) to node[above left, xshift=0.9cm, yshift=-0.35cm] {\tiny{\iaction?,$\clock\!\!<\!\!M$}} (s2);
    \draw[->] (s1) to node[above] {\tiny{$\oaction!,\clock\!\!<\!\!M$}} (s3);
    \draw[->] (s2) to node[above] {\tiny{$\iaction?,\clock\!\!<\!\!M$}} (s4);
    \draw[->] (s4) to node[below left, xshift=0.2cm] {\tiny{$\ooaction!,\clock\!\!<\!\!M$}}(s5);
    \draw[->] (s4) to node[above left, xshift=0.2cm] {\tiny{$\oaction!,\clock\!\!<\!\!M$}} (s6);
    
    \draw[->] (s0) to [loop above] node {\tiny{$\delta,\clock\!=\!M$}} (s0);
    \draw[->] (s2) to [loop above] node {\tiny{$\delta,\clock\!=\!M$}} (s2);
    \draw[->] (s3) to [loop below] node {\tiny{$\delta,\clock\!=\!M$}} (s3);
    \draw[->] (s5) to [loop above, yshift=-0.2cm] node {\tiny{$\delta,\!\clock\!=\!M$}} (s5);
    \draw[->] (s6) to [loop above] node {\tiny{$\delta,\clock\!=\!M$}} (s6);
    \end{tikzpicture}
    \subcaption{TA $\taification(\system)$, where we also assume that every transition resets $\clock$, i.e. $\{\clock\}$}
    \end{subfigure}
    \caption{Transformation of an LTS $\system$ into a TA with a TA-ification $\taification(\system)$
 }
    \label{fig:lts-ta-transformation}
\end{figure}

\end{example}

For proving conformance preservation of our $\taification$ operator, we translate back from $\taification(\system)$ to $\system$. 
When we formally want to disregard the timing aspects (i.e. guards, invariants, clocks) it is useful to formally define a TA's \emph{projection} onto its untimed system.
\begin{definition}[Projection]
\label{def:projection}
      Let $\specTA=\tatuple$ be a TA. Then its \emph{projected LTS} is: $\spec = \langle \locations,  \Act, \rightarrow', \locations_0 \rangle$, where: 
      
      $$\rightarrow' = \{\langle \ell,a,\ell'\rangle\in S\times\Act\times S \mid \langle \ell, a, \phi, \lambda, \ell'\rangle \in \rightarrow\}$$

      Let $\ttrace=(d_1,\action_1)\ldots(d_n,\action_n) \in \Sttraces(\specTA)$ be a suspended timed trace.
      We define the \emph{projection} of $\sigma$, i.e. its regular suspended trace without time, as $\project{\ttrace}=\action_1\ldots \action_n$.
    
\end{definition}

The following lemma describes the relation between traces of an LTS and timed traces in its canonic TA.
In essence, we may disregard the time-label associated to each action since the transformation $\taification$ neither adds nor removes behaviour, e.g. we associate $(d_1,\action_1)\ldots(d_n,\action_n)$ with $\action_1\ldots\action_n$ for some $d_i\in\RRplus$.
\begin{lemma}[Canonic Traces]
\label{lem:trace-trafo}
    Let $\system=\ltstuple$ be an LTS and $M\in\RRplus$, then:
    \begin{enumerate}
        \item \label{lem-bullet:LTStoTA} If $\trace\in\Straces(\sys)$, then there is $\ttrace\in\SttracesM(\taification(\sys))$ such that $\project{\ttrace}=\trace$
        \item \label{lem-bullet:TAtoLTS} If $\ttrace\in\SttracesM(\taification(\sys))$, then there is $\trace\in\Straces(\sys)$ such that $\project{\ttrace}=\trace$.
    \end{enumerate}
\end{lemma}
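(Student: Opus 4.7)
The plan is to prove both items simultaneously by induction on the length of the (timed) trace, relying on the tight structural correspondence built into Definition~\ref{def:transformation}: locations of $\taification(\sys)$ are exactly the states of $\sys$, the single clock $\clock$ is reset on every transition, each non-$\delta$ transition in $\taification(\sys)$ mirrors exactly one transition of $\sys$, and the $\delta$-self-loops in $\taification(\sys)$ appear exactly at the quiescent locations, which by construction coincide with the quiescent states of $\sys$.

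First I would establish a pointwise correspondence: for every $\ell \in \locations$ and every action $\action \in \Actq$,
\begin{enumerate}
    \item if $\action \in \Act$ and $\ell \transition{\action} \ell'$ in $\sys$, then for every $d \in (0,M)$ we have $\ell \transition{(d,\action)} \ell'$ in $\taification(\sys)$, since the guard is $\clock < M$ and after resetting $\clock$ the invariant $\clock \le M$ of $\ell'$ holds trivially;
    \item if $\action = \delta$ and $\ell$ is quiescent in $\sys$, then $\ell \transition{(M,\delta)} \ell$ in $\taification(\sys)$, because Definition~\ref{def:transformation} installs exactly this self-loop with guard $\clock = M$;
    \item conversely, any timed transition $\ell \transition{(d,\action)} \ell'$ in $\taification(\sys)$ with $\action \in \Act$ comes from an LTS-transition $\ell \transition{\action} \ell'$ (and necessarily $d < M$), while any $\ell \transition{(d,\delta)} \ell'$ forces $d = M$, $\ell' = \ell$, and quiescence of $\ell$ in $\sys$.
\end{enumerate}
This correspondence is immediate from reading off the two cases in the definition of $\rightarrow_{\sysTA}$, together with the fact that $\clock$ is always reset, so the invariant of the target location is automatic and no cross-transition clock obligation ever arises.

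For item~\ref{lem-bullet:LTStoTA}, I would induct on $|\trace|$. The base case $\trace = \varepsilon$ uses $\ttrace = \varepsilon$. For the step, write $\trace = \trace' \cdot \action$; the induction hypothesis supplies $\ttrace' \in \SttracesM(\taification(\sys))$ with $\project{\ttrace'} = \trace'$ ending in some location $\ell \in \sys \after \trace'$. Using the pointwise correspondence above, pick $d = M/2$ if $\action \in \Act$ and $d = M$ if $\action = \delta$, and extend by $(d,\action)$; then $\ttrace = \ttrace' \cdot (d,\action)$ lies in $\SttracesM(\taification(\sys))$ and projects to $\trace$. For item~\ref{lem-bullet:TAtoLTS}, the symmetric induction on $|\ttrace|$ strips the time component from each tuple, invoking the converse direction of the correspondence.

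The main obstacle, though not deep, is keeping the two notions of quiescence synchronised along the induction. I have to check that $\ell$ is quiescent in $\sys$ if and only if it is quiescent in $\taification(\sys)$ (in the timed sense of Definition~\ref{rem:tioconotation}). This holds because the outgoing non-$\delta$ transitions added by $\taification$ carry guard $\clock < M$, so an output $\oaction!$ with $d < M$ is enabled at $\ell$ in $\taification(\sys)$ exactly when $\ell \transition{\oaction!}$ in $\sys$; hence the absence of outputs carries over in both directions. This equivalence legitimises using $(M,\delta)$ on the TA side precisely where $\delta$ appears on the LTS side, closing the induction for both items.
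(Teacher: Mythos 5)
Your proposal is correct and follows essentially the same route as the paper's proof: both read off the transition-level correspondence directly from Definition~\ref{def:transformation} (outputs/inputs with guard $\clock<M$, $\delta$-self-loops with guard $\clock=M$ at quiescent states) and then induct on the trace length for item~\ref{lem-bullet:LTStoTA}, while item~\ref{lem-bullet:TAtoLTS} is handled by stripping the delays, which the paper phrases as a direct construction and you phrase as a symmetric induction. Your explicit check that LTS-quiescence and timed quiescence of Definition~\ref{rem:tioconotation} coincide is a point the paper leaves implicit, but it does not change the substance of the argument.
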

With Lemma~\ref{lem:trace-trafo} every original transition of an LTS is preserved under the transformation to a TA. Hence, it easily follows that the transformation also preserves input--enabledness.
\begin{corollary}
\label{cor:IOTA}
Let $\sys$ be an IOTS and $M\in\RRplus$, then $\taification(\sys)$ is an IOTA.    
\end{corollary}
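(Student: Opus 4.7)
The plan is to unfold the definition of IOTA and verify its single semantic condition for every location $\ell$ of $\taification(\sys)$, every input $\iaction? \in \Acti$, and every delay $d \in \RRnonzero$ with $d < M$: namely, that there exists $\ell'$ with $\ell \transition{(d,\iaction?)} \ell'$. I would fix such an $\ell$, $\iaction?$, and $d$, and construct the required transition directly from a witness in $\sys$.

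First, since $\sys$ is an IOTS, Definition~\ref{def:iots} guarantees an LTS transition $\ell \transition{\iaction?} \ell'$ in $\sys$ for some $\ell'\in \states$. By clause (5) of Definition~\ref{def:transformation}, $\taification$ lifts this to the TA transition $\langle \ell, \iaction?, \{\clock < M\}, \{\clock\}, \ell'\rangle$ in $\taification(\sys)$. It remains to check the three semantic requirements of Definition~\ref{def:tanotation} for the tuple $(d, \iaction?)$: the invariant $\PhiLoc(\ell) = (\clock \le M)$ holds at the moment the transition fires, the guard $\clock < M$ holds at that moment, and $\PhiLoc(\ell') = (\clock \le M)$ holds after the reset.

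The key observation that makes this routine is that, again by clause (5) of Definition~\ref{def:transformation}, \emph{every} transition produced by $\taification$ resets the clock $\clock$. A straightforward induction on path length in $\taification(\sys)$ therefore shows that whenever a location is entered along a computation the clock value is $0$; for the initial location this holds by convention. Hence after waiting $d$ time units at $\ell$ we have $\clock = d < M$, so both the invariant $\clock \le M$ and the guard $\clock < M$ are satisfied, and after resetting $\clock$ to $0$ the target invariant $\clock \le M$ trivially holds as well.

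The main obstacle, such as it is, is just being careful about the semantics of $\transition{(d,\action)}$ in Definition~\ref{def:tanotation}: one must evaluate the invariant of $\ell$ and the guard with respect to the \emph{current} clock valuation, which in turn requires the small inductive observation about clock resets noted above. Once that is in place the result is immediate, and $\taification(\sys)$ satisfies the input-enabledness clause of the IOTA definition, completing the proof.
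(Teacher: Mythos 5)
Your proof is correct and matches the paper's argument in substance: the paper dispenses with this corollary by noting that every LTS transition is preserved by $\taification$ (Lemma~\ref{lem:trace-trafo}), and the input case of that lemma's proof is exactly your direct check that an input transition becomes $\langle \ell, \iaction?, \clock < M, \{\clock\}, \ell'\rangle$ under invariant $\clock \le M$, so $\ell \transition{(d,\iaction?)}$ for all $d < M$. Your additional observation that every transition resets $\clock$, so the clock is $0$ on entry to each location, is a slightly more careful treatment of the semantics than the paper bothers with, but it is the same route.
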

We are now able to state one of our paper’s main contributions: transformation of an LTS to a TA preserves conformance from \ioco to $\tiocoM$.
\begin{theorem}
\label{thm:iocopreservation}
Let $\impl$ be an IOTS and $\spec$ be an LTS. Then:
$$
\impl\ \iocorel\ \spec \Longleftrightarrow  \taification(\impl)\ \tiocorel_M\ \taification(\spec)
$$
for all $M\in\RRplus$.
\end{theorem}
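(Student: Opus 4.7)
The plan is to establish a tight structural correspondence between an LTS and its TA-ification and then transfer conformance in both directions using the canonic-traces lemma (Lemma~\ref{lem:trace-trafo}). Concretely, I would first prove a \emph{state-correspondence} claim: for any LTS $\sys$, any suspension word $\trace = a_1 \cdots a_n \in (\qAct)^*$, and any timing lift $\ttrace = (d_1,a_1)\cdots(d_n,a_n)$ satisfying $d_i < M$ when $a_i \neq \delta$ and $d_i = M$ when $a_i = \delta$, one has $\taification(\sys) \afterM \ttrace = \sys \after \trace$ as sets of states/locations. This proceeds by induction on $n$ directly from Definition~\ref{def:transformation}: transitions and their targets are copied verbatim, each guard ($c<M$ or $c=M$) is met by the chosen $d_i$, the invariant $c\le M$ holds throughout, and because the unique clock is reset after every visible action the next step again starts with $c=0$, so no future step is obstructed.

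Next I would prove an \emph{output-correspondence} claim at each such state/location $\ell$. A non-$\delta$ output $o! \in \out(\ell)$ corresponds to $(d,o!) \in \outM(\ell)$ for every $d \in \RRnonzero$ with $d<M$, since Definition~\ref{def:transformation} assigns every such transition the guard $c<M$ and reset $\{c\}$. Moreover, a location of $\taification(\sys)$ enables no output for any $d<M$ iff the underlying LTS-state is quiescent; in that case the added $\delta$-loop with guard $c=M$ yields $(M,\delta) \in \outM(\ell)$, giving $\delta \in \out(\ell) \iff (M,\delta) \in \outM(\ell)$.

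With both correspondences the theorem essentially falls out. For $\Longrightarrow$, assume $\impl\ \iocorel\ \spec$ and pick $\ttrace \in \SttracesM(\taification(\spec))$. Lemma~\ref{lem:trace-trafo}(\ref{lem-bullet:TAtoLTS}) gives $\trace := \project{\ttrace} \in \Straces(\spec)$, and the state correspondence identifies $\taification(\sys')\afterM\ttrace$ with $\sys'\after\trace$ for $\sys' \in \{\impl,\spec\}$. Any $(d,o) \in \outM(\taification(\impl)\afterM\ttrace)$ projects via the output correspondence to $o \in \out(\impl\after\trace) \subseteq \out(\spec\after\trace)$ by \ioco, which lifts back to $(d,o) \in \outM(\taification(\spec)\afterM\ttrace)$. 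For $\Longleftarrow$, fix $\trace \in \Straces(\spec)$ and use Lemma~\ref{lem:trace-trafo}(\ref{lem-bullet:LTStoTA}) to obtain a lift $\ttrace \in \SttracesM(\taification(\spec))$ with $\project{\ttrace}=\trace$; whenever $\trace \in \Straces(\impl)$, the state correspondence puts this same $\ttrace$ in $\SttracesM(\taification(\impl))$. Any $o \in \out(\impl\after\trace)$ (which forces $\trace \in \Straces(\impl)$) lifts to some $(d,o) \in \outM(\taification(\impl)\afterM\ttrace)$, and $\tiocoM$ places it in $\outM(\taification(\spec)\afterM\ttrace)$, which the output correspondence projects to $o \in \out(\spec\after\trace)$.

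The main obstacle is the bookkeeping around the many-to-one relationship between timed and untimed traces: a single $\trace$ admits uncountably many lifts $\ttrace$, and one has to argue that this choice never affects the set-theoretic containments compared by $\tiocoM$. The two correspondence claims are designed precisely to collapse this non-determinism uniformly, because the guards only track the threshold $d_i<M$ versus $d_i=M$ rather than exact values, and the clock reset on every visible action prevents constraints in earlier positions from interacting with later ones.
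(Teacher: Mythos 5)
Your proposal is correct, and its overall strategy coincides with the paper's: move between suspension traces and timed suspension traces whose delays satisfy $d<M$ for visible actions and $d=M$ for $\delta$, and use the guard structure of \autoref{def:transformation} to pin down the delay component of outputs ($d<M$ for $\oaction!\in\Acto$, exactly $M$ for $\delta$). The difference lies in the decomposition. The paper works only with the existential statement of Lemma~\ref{lem:trace-trafo} (``some lift/projection exists'') and then argues informally at the trace level; this leaves implicit a point your proposal makes explicit, namely that for the \emph{given} $\ttrace$ with $\project{\ttrace}=\trace$, the specific extension $\ttrace\cdot(d,\oaction)$ is again a timed suspension trace of $\taification(\spec)$ (not merely that some lift of $\trace\cdot\oaction$ exists with possibly different delays). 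Your state-correspondence claim ($\taification(\sys)\afterM\ttrace=\sys\after\trace$ for every lift with the $d_i<M$/$d_i=M$ pattern, provable by induction using the per-step clock reset) together with the output-correspondence claim delivers exactly this uniformity, and also handles nondeterministic after-sets cleanly, so your argument is if anything tighter than the paper's. In the converse direction the paper wraps essentially the same contradiction argument in an induction on trace length whose hypothesis it never uses; your direct argument (fix $\trace\in\Straces(\spec)$, choose any valid lift, transfer outputs both ways) is simpler and equally valid. The only small addition worth stating is that every $\ttrace\in\SttracesM(\taification(\spec))$ automatically has the required delay pattern (immediate from the guards $\clock<M$, $\clock=M$ and the invariant $\clock\le M$), so your state correspondence indeed applies to all timed suspension traces of the canonic TA; with that remark, your proof is complete.
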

Theorem~\ref{thm:iocopreservation} guarantees preservation of conformance, allowing practitioners to model only the system’s input–output behaviour.
Quiescence and time-outs are added explicitly by the transformation described in \autoref{def:transformation}.

Technically, Theorem~\ref{thm:iocopreservation} works for any non-zero time-out $M$. In reality, practitioners should use the time-out time that works best in their domain.
\section{Testing with TAs}
\label{sec:testing}
In this section we investigate the practical half of our testing theory. First we define test cases for LTSs and TAs, respectively.
Our test cases are inspired from the literature in~\cite{T2008,PvdB18} for the LTS case, and~\cite{BBB04} for the TA case.

We present our core result that mirrors the practical side of \ioco being preserved under the LTS-to-TA-transformation: 
Concretely, if an implementation passes every untimed test of its LTS specification, it will also pass every timed test of the lifted specification and conversely, any untimed test that shows non-conformance has a timed counterpart that reveals the same defect. 
Additionally, we show that $\taification$ commutes with test generation: applying $\taification$ before or after the standard ioco
test-generation algorithm yields the same set of tests.

\subsection{Test Cases for LTS}

We model test cases for an LTS $\system$ as tree-shaped LTSs with the same inputs (representing what a tester may send) and same outputs (what the SUT may emit), and explicit $\delta$ actions in its transition relation.
The intuition is that every state presents a choice for the tester: Either (1) stop -- decide the verdict and end the test, (2) observe -- wait to see whether the SUT produces an output or remains quiescent, or (3) stimulate -- send an input to the SUT.

The last option comes with one caveat: while we are about to apply an input, the SUT is still allowed to ``interrupt'' us with an output. To handle races like this cleanly, the test tree always enables the entire output set alongside the chosen input. Notably, quiescence is \emph{not} enabled in such a state, since we want to apply the input before the (as of yet time-agnostic) quiescence-timeout.

\begin{definition}[Test case for LTS]
\label{def:ltstest}
A test case for an LTS $\spec$ is an LTS $\test=\langle \states^\test, \qAct, \trans^\test, \startingstate^\test \rangle$ s.t.:
 \begin{itemize}
    \item $\test$ uses the same action labels as $\spec$ plus $\delta$
        \item $\test$ has only finite traces, is deterministic and has no cycles 
    \item There are two special states \pass, \fail $\in \states^\test$
    \item States {\pass} and {\fail} have no outgoing transitions:  \\  
        $\forall\ \action \in \Actq: \pass\ntransition{\action} \y\ \fail\ntransition{\action}$        
    \item Every other state enables all outputs $\Acto$, and either one input  or $\delta$, i.e.\\
        $\forall\ \state \in \states^\test\setminus\{\pass,\fail\}:$\\
        $ (|\inp(\state)|\!=\!0 \wedge \out(\state)\!=\!\Actoq) \vee (\out(\state)\!=\!\Acto \wedge |\inp(\state)|\!=\!1)$
    \item Input-specifiedness: All traces of $\test$ that end with an input are suspended traces of $\spec$, i.e. 
        $\forall\ \trace \in (\qAct)^*: \forall\ \iaction? \in \Acti: \trace\!\cdot\!\iaction? \in \traces(\test) \Trans \trace\!\cdot\!\iaction? \in \Straces(\spec)$
    \item Soundness: All traces of $\test$ leading to {\pass}, are suspended traces of $\spec$\\
        $\forall\ \trace \in \traces(\test): \test \transition{\trace} \pass \Trans \trace \in \Straces(\spec)$ 
    \item Correctness: All traces of $\test$ that end with an output and lead to {\fail}, are not suspended traces of $\spec$:\\ 
        $\forall\ \trace \in (\qAct)^*,\forall\ \oaction! \in \Acto:$\\
        $\trace\!\cdot\!\oaction! \in \traces(\test) \y \test\transition{\trace\!\cdot\!\oaction!}\fail \Trans \trace\!\cdot\!\oaction! \not\in \Straces(\spec).$
 \end{itemize}
\end{definition}

Running a test case $\test$ against an implementation IOTS $\impl$ may yield different traces, due to the presence of nondeterministic choices in $\impl$.
Then $\test$ fails on $\impl$ if at least one of the traces of $\impl$ leads to a fail-verdict in $\test$. 

\begin{definition}[Test verdict]
\label{def:ltstestverdict}
Let $\test$ be a  test case for an LTS $\spec$, and let $\impl$ be an implementation IOTS, 
\begin{align*}
\text{$\impl$ fails $\test \iff                                     
        \exists\ \trace\in\Straces(\impl) \cap \traces(\test): \test\va{\sigma}\fail$} 
\end{align*}
Likewise, $\impl$ passes $\test$ iff $\impl$ \cancel{fails} $\test$.
\end{definition}

We note that we do not consider asynchronous inconsistencies in communication between the tester and the SUT, where input sending from the test case and output observation from the SUT may conflict with each other \cite{PvdB18}. In this paper, we abstract from this and just deal with the traces that have been executed \cite{T2008}.
We could extend this in future work along the lines of \cite{PvdB18}; our paper is then still valid, since the conflict resolution yields the trace that has actually been executed.

\subsection{Test Cases for TA}
We model \emph{timed} test cases for a TA as tree-shaped TAs that reuse the same inputs, outputs and explicit action $\delta$. All intuitions from the LTS test case model carry over, but two timed-specific tweaks matter:
(1) Time can elapse while we \emph{observe}. In every node of the TA time may now elapse as long as the state invariant holds. Quiescence is detected when \emph{no} output occurs during such a delay. The concrete quiescence-timeout bound is encoded as a guard on the transition. 
(2) Races now involve outputs \emph{and} time. When we decide to \emph{stimulate} the system, we may do so with some delay smaller than the quiescence delay. This makes the race between the SUT emitting output and us providing input explicit.
We note that we restrict our timed test cases to \emph{canonic} TAs of LTSs, because this excludes all timed automata with multiple clocks, or non-trivial invariants and guards.

\begin{definition}[Test case for TA]
\label{def:tatest}
A test case $\testTA$ for $\specTA$ is a $\testTA=\langle \locations^\test, \qAct, \Phi^\test_{\locations^\test}, \calC^\test, \rightarrow^\test, \location^\test_0\rangle$ such that:
 \begin{itemize}
    \item $\testTA$ is a canonic TA for a given $\spec$ and $M\in \RRplus$
    \item $\testTA$ uses the same action labels as $\specTA$ plus $\delta$
    \item There are two special locations \pass, \fail $\in \locations^\test$
    \item Locations {\pass} and {\fail} have no outgoing transitions: \\ 
        $\forall\ \action \in \Actq, \forall\ \tim \in \RRo: \pass\nva{(\tim,\action)} \y\ \fail \nva{(\tim,\action)} $ 
    \item $\testTA$ has only finite traces and has no cycles 
    \item $\testTA$ is deterministic, i.e. \\ 
    $\forall\ \action \in \Actq, \forall\ \tim \in \RRo, \forall\ \location \in \locations^\test: |\location \afterM (d,\action)| \leq 1$
    \item Every location enables except {\pass} and {\fail} all outputs $\Acto$, and either one input  or $\delta$, i.e.
    
        $\forall\ \location \in \locations^\test\setminus\{\pass,\fail\}: \\
        (|\inp(\location)|\!=\!0 \wedge \out(\ell)=\!\Actoq) \vee (\out(\location)\!=\!\Acto \wedge |\inp(\location)|\!=\!1)$
    \item All locations except {\pass} and {\fail} have the invariant $\clock \le M$, i.e. \\ 
        $\forall\ \location \in \locations^\test\setminus\{\pass,\fail\}:  \PhiLoc^\test(\location) = (\clock \le M)$
    \item All non-$\delta$ transitions have clock guard $\clock < M$, i.e.\\ 
        $\forall\  \langle \location, \action, \phi, \lambda, \location'\rangle \in\ \rightarrow^\test: \action \neq \delta \implies \phi = (\clock < M)$
    \item All $\delta$ transitions have clock guard $\clock = M$, i.e.\\ 
        $\forall\ \langle \location, \action, \phi, \lambda, \location'\rangle \in\ \rightarrow^\test: \action = \delta \implies \phi = (\clock = M)$

    \item Input-specifiedness: all traces of $\testTA$ that end with an input are suspended traces of $\specTA$, i.e.\\ 
        $\forall\ \ttrace \in \ttraces(\testTA), \forall\ \iaction? \in \Acti, \forall\ d \in \RRnonzero:\\ d \leq M \wedge \ttrace\,\cdot\,(d,\iaction?) \in \ttraces(\testTA) \Trans \ttrace\,\cdot\,(d,\iaction?) \in \SttracesM(\specTA)$
    \item Soundness: All timed traces of $\testTA$ leading to {\pass}, are suspended timed traces of $\specTA$\\
        $\forall\ \ttrace \in \ttraces(\testTA): \testTA \transition{\ttrace} \pass \Trans \ttrace \in \SttracesM(\specTA)$ 
    \item Correctness: All timed traces of $\testTA$ that end with an output and lead to {\fail}, are no suspended timed traces of $\specTA$:\\ 
        $\forall\ \ttrace\!\cdot\!(\tim,\oaction!) \in \ttraces(\testTA), \forall\ \oaction! \in \Acto, \forall\ \tim \in \RRnonzero: d \leq M \wedge \testTA\transition{\ttrace\!\cdot\!(\tim,\oaction!)}\fail\ \Trans \ttrace\!\cdot\!(\tim,\oaction!) \not\in \SttracesM(\specTA).$

  \end{itemize}
\end{definition}

Similarly, running a test case $\test$ against an implementation IOTS $\impl$ may yield different traces, due to the presence of nondeterministic choices in $\impl$.
Then $\test$ fails on $\impl$ if at least one of the traces of $\impl$ leads to a fail-verdict in $\test$.

\begin{definition}[Timed test verdict]
\label{def:tatestverdict}
Let $\testTA$ be a  test case for a TA $\specTA$, and let $\implTA$ be an implementation IOTA. We say:
\begin{align*}
\text{$\implTA$ fails $\testTA \iff \exists\ \ttrace \in                      \SttracesM(\implTA)\cap\ttraces(\testTA):\testTA \transition{\ttrace} \fail$.} 
\end{align*}
Likewise, $\implTA$ passes $\testTA$ iff $\implTA$ \cancel{fails} $\testTA$.
\end{definition}

\begin{example}
For the LTS $\system$ in \autoref{fig:lts-ta-transformation}(a), a corresponding test case is shown in \autoref{fig:test-lts-ta}(a).  
Similarly, \autoref{fig:test-lts-ta}(b) gives a test for the TA $\systemTA$ of \autoref{fig:lts-ta-transformation}(b).
\begin{figure}
    \hspace{-1cm}
    \centering
    \begin{subfigure}[b]{0.45\textwidth}
    \begin{tikzpicture}
    \tikzset{state/.style= {draw, circle}} 
    \tikzset{timedstate/.style= {draw, rounded corners}}
    \node[state] (s0) {};
    \node[above left of=s0] (init) {};

    \node[state, above right of=s0, xshift=0.75cm] (s1) {};
    \node[timedstate, right of=s0, xshift=0.75cm] (s2) {\scriptsize{\fail}};
    \node[timedstate, below right of=s0, xshift=0.75cm] (s3) {\scriptsize{\fail}};

    \node[timedstate, above right of=s1, xshift=0.75cm] (s4) {\scriptsize{\pass}};
    \node[timedstate, right of=s1, xshift=0.75cm] (s5) {\scriptsize{\pass}};
    \node[timedstate, below right of=s1, xshift=0.75cm] (s6) {\scriptsize{\fail}};

    \draw[->] (init) to (s0);
    \draw[->] (s0) to node[above] {\iaction?} (s1);
    \draw[->] (s0) to node[above] {\oaction!} (s2);
    \draw[->] (s0) to node[above, xshift=0.4cm, yshift=-0.2cm] {\ooaction!} (s3);
    
    \draw[->] (s1) to node[above] {$\delta$} (s4);
    \draw[->] (s1) to node[above] {\oaction!}(s5);
    \draw[->] (s1) to node[above, xshift=0.4cm, yshift=-0.2cm] {\ooaction!} (s6);
    
    \end{tikzpicture}
    \subcaption{A test case for $\system$}
    \end{subfigure}
    \quad\quad
    \begin{subfigure}[b]{0.45\textwidth}
    \begin{tikzpicture}
    \tikzset{state/.style= {draw, circle}} 
    \tikzset{timedstate/.style= {draw, rounded corners}}
    \node[timedstate, below right of=s0, xshift=0.5cm] (s0) {\scriptsize{$\clock\!\leq\!M$}};
    \node[above left of=s0] (init) {};
    
    \node[timedstate, above right of=s0, xshift=1cm] (s1) {\scriptsize{$\clock\!\leq\!M$}};
    \node[timedstate, right of=s0, xshift=1cm] (s2) {\scriptsize{\fail}};
    \node[timedstate, below right of=s0, xshift=1cm] (s3) {\scriptsize{\fail}};
    
    \node[timedstate, above right of=s1, xshift=1cm] (s4) {\scriptsize{\pass}};
    \node[timedstate, right of=s1, xshift=1.4cm] (s5) {\scriptsize{\fail}};
    \node[timedstate, below right of=s1, xshift=1cm] (s6) {\scriptsize{\fail}};

    \draw[->] (init) to (s0);
    \draw[->] (s0) to node[above, xshift=-0.25cm] {\tiny{$\iaction?,\!\clock\!\!<\!\!M,\!\clock$}} (s1);
    \draw[->] (s0) to node[above, yshift=-0.07cm] {\tiny{$\oaction!,\!\clock\!\!<\!\!M,\!\clock$}} (s2);
    \draw[->] (s0) to node[below, xshift=-0.35cm, yshift=0.05cm] {\tiny{$\ooaction!,\!\clock\!\!<\!\!M,\!\clock$}} (s3);
    
    \draw[->] (s1) to node[above, xshift=-0.25cm, yshift=-0.05cm] {\tiny{$\delta,\!\clock\!\!=\!\!M,\!\clock$}} (s4);
    \draw[->] (s1) to node[above, yshift=-0.07cm] {\tiny{$\oaction!,\!\clock\!\!<\!\!M,\!\clock$}} (s5);
    \draw[->] (s1) to node[above, xshift=0.45cm, yshift=-0.09cm] {\tiny{$\ooaction!,\!\clock\!\!<\!\!M,\!\clock$}}(s6);

    \end{tikzpicture}
    \subcaption{A test case for $\systemTA = \taification(\system)$}
    \end{subfigure}
    \caption{Corresponding test cases before and after $\taification$}
    \label{fig:test-lts-ta}
\end{figure}

\end{example}

Every trace with at least one output that is not included in the specification is labelled fail.
This aligns with the concept of underspecifications in \ioco.
In practice, we generate test cases on-the-fly according to the specification model.

\subsection{Testing}

The transformation in~\autoref{def:transformation} allows the specification activity to remain entirely in the untimed setting. A modeller provides a plain LTS that captures the functional behaviour of the intended system. From here \emph{all} auxiliary machinery-- adding quiescence and having a uniform time-out bound is introduced automatically by $\taification$. Hence, the effort of dealing with time and quiescence is shifted from the modeller to the transformation.

Given~\autoref{def:ltstest} and~\autoref{def:tatest}, the next theorem establishes a one-to-one correspondence between the test suites obtained in the untimed and in the timed paradigm.

\begin{theorem}[Test Correspondence]
\label{thm:testcorrespondence}
    Let $\spec$ be an LTS and let $\atestLTS(\spec)$ be the set of all its tests according to~\autoref{def:ltstest}. Similarly, let $\atestTA(\taification(\spec))$ be the set of all tests for $\taification(\spec)$, which is a TA according to~\autoref{def:tatest}. Then, for all $M\in\RRplus:$
    $$\taification(\atestLTS(\spec)) = \atestTA(\taification(\spec))$$
\end{theorem}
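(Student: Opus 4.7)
The plan is to prove set equality by mutual inclusion, exploiting the fact that $\taification$ attaches the timing decoration purely syntactically (invariants, guards, resets) and does not alter the underlying graph of states and transitions, together with Lemma~\ref{lem:trace-trafo} and Definition~\ref{def:projection} which let us move back and forth between traces of $\test$ and $\taification(\test)$ and between traces of $\spec$ and $\taification(\spec)$.

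For the inclusion $\taification(\atestLTS(\spec))\subseteq \atestTA(\taification(\spec))$, I take an arbitrary $\test\in\atestLTS(\spec)$ and check each bullet of Definition~\ref{def:tatest} for $\taification(\test)$ against $\taification(\spec)$. The canonic-TA structure (invariant $\clock\le M$ on every ordinary location, guard $\clock<M$ on non-$\delta$ edges and $\clock=M$ on $\delta$ edges, single clock $\clock$ reset on every edge) is exactly what Definition~\ref{def:transformation} adds, so these syntactic bullets come for free. The skeletal bullets (tree shape with no cycles, determinism, the special \pass/\fail locations with no outgoing transitions, and the per-state action-enabling requirement on $\inp$ and $\out$) all transfer verbatim from $\test$, because $\taification$ neither adds nor removes underlying transitions on ordinary labels, and the $\delta$-self-loops it would introduce at \pass/\fail must be excluded in the standard reading of $\taification$ on test cases (I will state this as a minor convention, consistent with Figure~\ref{fig:test-lts-ta}).

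The three semantic bullets---input-specifiedness, soundness, correctness---are where Lemma~\ref{lem:trace-trafo} does the real work. For soundness, take $\ttrace\in\ttraces(\taification(\test))$ with $\taification(\test)\transition{\ttrace}\pass$. By Lemma~\ref{lem:trace-trafo}(\ref{lem-bullet:TAtoLTS}) applied to $\test$, its projection $\project{\ttrace}$ is a suspension trace of $\test$ leading to \pass, hence by soundness of $\test$ a suspension trace of $\spec$; Lemma~\ref{lem:trace-trafo}(\ref{lem-bullet:LTStoTA}) applied to $\spec$ then yields $\ttrace\in\SttracesM(\taification(\spec))$, using that the timing annotations on $\ttrace$ are exactly the ones prescribed by $\taification$ (delay $<M$ on inputs and outputs, delay $=M$ on $\delta$), so the matching suspended timed trace of $\taification(\spec)$ is uniquely determined. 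Input-specifiedness and correctness follow by the same trace translation, noting for correctness that a trace ending in an output with delay $\le M$ projects to a trace ending in that output in $\test$.

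For the reverse inclusion $\atestTA(\taification(\spec))\subseteq\taification(\atestLTS(\spec))$, I use the fact that every $\testTA\in\atestTA(\taification(\spec))$ is, by the first bullet of Definition~\ref{def:tatest}, canonic. This means its invariants, guards and clock resets are rigidly determined by the time-out $M$, so $\testTA$ is fully reconstructed from its LTS projection $\test$ (Definition~\ref{def:projection}) via $\taification(\test)=\testTA$. It remains to verify $\test\in\atestLTS(\spec)$, which is the same bullet-by-bullet check as above but now in the other direction, again reducing via Lemma~\ref{lem:trace-trafo} to the already-assumed properties of $\testTA$.

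The main obstacle I expect is the bookkeeping around $\delta$: in an LTS test case $\delta$ appears as an ordinary action to distinct successor states, whereas Definition~\ref{def:transformation} in its stated form introduces $\delta$ only as self-loops on quiescent states and treats all other labels uniformly with guard $\clock<M$. To make the theorem literally true I therefore need to fix, once and for all, the convention that $\taification$ applied to a test case preserves the existing $\delta$-transitions (decorated with guard $\clock=M$ and reset $\{\clock\}$) and does not add self-loops at \pass/\fail. Once this convention is in place, the argument above is essentially a routine verification: the genuine mathematical content is concentrated in Lemma~\ref{lem:trace-trafo} and in the observation that canonicity makes the map $\test\mapsto\taification(\test)$ a bijection between LTS tests and canonic TA tests with bound $M$.
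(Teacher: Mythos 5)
Your proposal follows essentially the same route as the paper's own proof: both directions are established by a bullet-by-bullet check of Definition~\ref{def:tatest} against Definition~\ref{def:ltstest}, with the syntactic clauses (invariants $\clock\le M$, guards $\clock<M$ and $\clock=M$, single reset clock) read off from Definition~\ref{def:transformation}, the semantic clauses transferred via Lemma~\ref{lem:trace-trafo}, and the reverse inclusion handled by projecting $\testTA$ to an LTS test and using canonicity to recover $\testTA=\taification(\project{\testTA})$. Your explicit convention about how $\taification$ treats the pre-existing $\delta$-transitions of a test case (and the absence of $\delta$-self-loops at \pass/\fail) is a point the paper's proof leaves implicit, so flagging it is a small improvement rather than a deviation.
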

\smallskip

\autoref{thm:testcorrespondence} shows that test derivation and the transformation $\taification$ commute: transforming a test after it is generated yields exactly the same result as generating the test after the specification has been transformed. This commutation property and~\autoref{thm:testsoundness} are the technical core of our presented work, because both guarantee that every verdict reached in the untimed paradigm is mirrored in the timed one. We therefore lift correctness properties from $\iocorel$ to $\tiocoM$.
\begin{theorem}
\label{thm:testsoundness}
Let $M\in\RRplus$, $\impl$ be an IOTS and $\spec$ be an LTS. 
Let $\atestLTS(\spec)$ and $\atestTA(\taification(\spec))$ be the set of all annotated tests for $\spec$ and $\taification(\spec)$, respectively. Then:
\begin{enumerate}\itemsep0mm
    \item \label{thm:1-sound}
    If $\impl\ \text{passes } \atestLTS(\spec)$, then $\taification(\impl)\ \text{passes } \atestTA(\taification(\impl))$

    \item\label{thm:2-comp}
    If $ \impl\ \text{fails } \atestLTS(\spec)$, then $\taification(\impl)\ \text{fails } \atestTA(\taification(\impl)).$ 

\end{enumerate}
\end{theorem}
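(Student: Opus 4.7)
The plan is to reduce both parts of the theorem to a single test-by-test correspondence and then apply Theorem~\ref{thm:testcorrespondence} together with Lemma~\ref{lem:trace-trafo}. By Theorem~\ref{thm:testcorrespondence}, the equality $\taification(\atestLTS(\spec)) = \atestTA(\taification(\spec))$ holds, so every timed test arises as $\taification(\test)$ for some $\test \in \atestLTS(\spec)$ and conversely. It therefore suffices to establish, for any fixed $\test \in \atestLTS(\spec)$, the biconditional ``$\impl$ fails $\test$ if and only if $\taification(\impl)$ fails $\taification(\test)$''. Both claims of the theorem then follow by quantifying over the full test suite; part~(2) is the forward direction, and part~(1) is the contrapositive of the backward direction.

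For the forward direction I would start from a witnessing $\trace \in \Straces(\impl) \cap \traces(\test)$ with $\test \transition{\trace} \fail$. Lemma~\ref{lem:trace-trafo}(1) lifts $\trace$ to a timed trace $\ttrace \in \SttracesM(\taification(\impl))$ with $\project{\ttrace} = \trace$. The key observation is that $\taification$ imposes uniform timing constraints on every visible action --- $\clock < M$ (with a reset of $\clock$) for non-$\delta$ actions and $\clock = M$ (with a reset) for $\delta$ --- so the exact same sequence of delays chosen inside $\taification(\impl)$ is also admissible in $\taification(\test)$. Since $\taification$ preserves the underlying locations and action labels of a test case (and in particular maps the distinguished $\fail$-location to itself), this $\ttrace$ lies in $\ttraces(\taification(\test))$ and drives $\taification(\test)$ to $\fail$.

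For the backward direction I would start with a failing timed trace $\ttrace \in \SttracesM(\taification(\impl)) \cap \ttraces(\taification(\test))$ with $\taification(\test) \transition{\ttrace} \fail$. By Lemma~\ref{lem:trace-trafo}(2) the projection $\trace = \project{\ttrace}$ lies in $\Straces(\impl)$. Because $\taification$ only decorates transitions with guards, resets, and invariants, stripping the delays with $\project{\cdot}$ yields a trace of $\test$ that ends in the same $\fail$-location, so $\trace \in \traces(\test)$ and $\test \transition{\trace} \fail$, showing that $\impl$ fails $\test$.

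The main obstacle I expect is the careful bookkeeping around quiescence: I need to verify that a $\delta$-step in $\test$ taken at a quiescent state corresponds exactly to the $\delta$-step at $\clock = M$ in $\taification(\test)$, and that the projection $\project{\cdot}$ maps these $(M,\delta)$-tuples back to untimed $\delta$-actions so that the $\afterM$ computation on the timed side and the $\after$ computation on the untimed side agree on which target location (in particular $\fail$) is reached. This reduces to a structural check that the test-specific clauses of Definition~\ref{def:tatest} --- invariants $\clock \leq M$, guards $\clock < M$ for non-$\delta$ transitions, $\clock = M$ for $\delta$ transitions, and preservation of $\pass$ and $\fail$ --- align precisely with the image of Definition~\ref{def:ltstest} under $\taification$, which is exactly the content of Theorem~\ref{thm:testcorrespondence}.
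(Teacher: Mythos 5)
Your proposal matches the paper's own proof in all essentials: both parts reduce to Theorem~\ref{thm:testcorrespondence} plus Lemma~\ref{lem:trace-trafo}, with part~(1) handled via contraposition (your backward direction) and part~(2) via the direct lifting of a failing trace (your forward direction). Your explicit remark that the uniform guards ($\clock < M$, $\clock = M$) and resets make the same delays admissible in $\taification(\impl)$ and $\taification(\test)$ simultaneously is a slightly more careful justification of a step the paper treats implicitly, but it is the same argument.
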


\section{Related Work}
\label{sec:related-work}

\paragraph{Timed conformance.} 
The testing of real-time systems has long been a topic of research interest. 
Early work extended the theory of testing deterministic Mealy machines to timed input/output automata~\cite{SVD01}. 
Later, \cite{NS03} expand these results by considering a determinizable class of non-deterministic timed systems.
From there, much of the theory evolved into LTS‑based ioco frameworks~\cite{T1996}.
This is because LTSs capture non‑deterministic branching while maintaining an operational view of inputs and outputs.
The best‑known timed variations are tioco~\cite{LMN2004}, rtioco~\cite{KT2004}, and the variant adopted here, $\tiocoM$~\cite{BBB04}. 
Also of note are dtioco for distributed systems and multi-traces~\cite{GHG13} and live timed ioco (ltioco), which further distinguishes two flavours of quiescence and works directly on TA zone graphs~\cite{LGL19}.
Other work combines timed properties with probabilities~\cite{GHS19,N11}.

Timed testing remains a topic of interest, even beyond ioco and LTSs, e.g.~\cite{AKY24} use timed FSMs, and~\cite{CSP23} augment CSP with discrete time. The work of~\cite{K21} brings forth decidability results for TAs with one clock--which relates to our systems after the transformation $\taification$.

\paragraph{Quiescence and ioco.}
Tretmans~\cite{T1996} introduced quiescence in his seminal work on ioco theory via suspension traces, which include $\delta$ to denote the absence of observable output. The concept was later refined, with Stokkink et al.\ treating $\delta$ as a first-class citizen and discussing well-formedness rules~\cite{T2008} and divergence~\cite{STS2012}. 

Numerous ioco variants have been developed to suit different modelling contexts and application domains, including compositional ioco, i.e. uioco \cite{JT2019,TJ2022}, probabilistic ioco~\cite{GS18}, symbolic ioco~\cite{FrantzenTW06,BosT19}, and modal ioco~\cite{LochauPKS14}.

\paragraph{Model transformations and tool support.}
To operationalise conformance, design artefacts must first be transformed into testable models, then automated tools can execute the resulting test suites and provide verdicts.
Two such related model-to-test transformations are provided by~\cite{PF99}, who take the opposite route to ours and translate a TA into an untimed one to leverage the arsenal of available untimed testing techniques, and~\cite{ITV20} who derive timed test cases directly from UML activity diagrams.
Noteworthy tools of industrial maturity realise these theories, e.g. UPPAAL Tron~\cite{LMN2004} executes tioco test suites and RT-Tester~\cite{PH16} supports safety-critical certifications in the automotive and aviation industry.

\section{Conclusion}
\label{sec:conclusion}
We provided a lightweight route for lifting untimed LTS models and test suites into the timed domain.
Central to our approach is the canonic TA-ification operator $\taification$ (cf. Definition~\ref{def:transformation}), which augments any LTS/IOTS with a single clock that models quiescence explicitly as a time-out at a user-chosen bound~$M$.
We provide proofs that $\taification$ preserves conformance from the classical \ioco\ relation to $\tiocoM$ (cf. Theorem~\ref{thm:iocopreservation}), and, via a tight construction of test cases (cf.~\autoref{def:ltstest} and~\autoref{def:tatest})~\autoref{thm:testsoundness} guarantees that every fail detectable in the untimed setting remains detectable once timing constraints are introduced via the transformation. 
Additionally, we showed that $\taification$ commutes with test generation: applying $\taification$ before or after the standard ioco
test-generation algorithm yields the same set of tests.

Overall, our work lowers the entry barrier for timed conformance testing: existing LTS specifications, implementation models and off-the-shelf ioco tooling can be used.   
This is how most ioco-based testing with quiescence was done in practice regardless--we provided the formal underpinning enabling this practice.
An intriguing next step in this line of work is integrating support for internal actions ($\tau$-actions). The authors of \cite{STS2013} show how divergence can be explicitly modelled as quiescence.
A complementary research direction from the perspective of a practitioner is the automatic inference for optimal time-out bounds $M$, for example from observed traces or domain knowledge, which would reduce manual tuning even further.

\subsubsection{Acknowledgments:} 
This project has received funding from the European Union’s Horizon 2020 research and innovation programme under the Marie Skłodowska-Curie grant agreement No 101008233 (\textsc{MISSION}).

\bibliographystyle{abbrv}
\bibliography{thebib.bib}

\begin{thebibliography}{10}

\bibitem{A1999}
R.~Alur.
\newblock Timed automata.
\newblock In N.~Halbwachs and D.~A. Peled, editors, {\em Computer Aided
  Verification, 11th International Conference, {CAV}'99, Trento, Italy, July
  6-10, 1999, Proceedings}, volume 1633 of {\em Lecture Notes in Computer
  Science}, pages 8--22. Springer, 1999.

\bibitem{CSP23}
J.~Baxter, A.~Cavalcanti, M.~Gazda, and R.~M. Hierons.
\newblock Testing using csp models: Time, inputs, and outputs.
\newblock {\em ACM Trans. Comput. Logic}, 24(2), Jan. 2023.

\bibitem{BB07}
L.~B. Briones.
\newblock Theories for model-based testing: real-time and coverage, 2007.

\bibitem{BBB04}
L.~B. Briones and E.~Brinksma.
\newblock A test generation framework for \emph{quiescent} real-time systems.
\newblock In {\em Formal Approaches to Software Testing, 4th International
  Workshop, {FATES} 2004, Linz, Austria, Revised Selected Papers}, volume 3395
  of {\em LNCS}, pages 64--78. Springer, 2004.

\bibitem{FrantzenTW06}
L.~Frantzen, J.~Tretmans, and T.~A.~C. Willemse.
\newblock A symbolic framework for model-based testing.
\newblock In K.~Havelund, M.~N{\'{u}}{\~{n}}ez, G.~Rosu, and B.~Wolff, editors,
  {\em Formal Approaches to Software Testing and Runtime Verification, First
  Combined International Workshops, {FATES} 2006 and {RV} 2006, Seattle, WA,
  USA, August 15-16, 2006, Revised Selected Papers}, volume 4262 of {\em
  Lecture Notes in Computer Science}, pages 40--54. Springer, 2006.

\bibitem{GHG13}
C.~Gaston, R.~M. Hierons, and P.~Le~Gall.
\newblock An implementation relation and test framework for timed distributed
  systems.
\newblock In H.~Yenig{\"u}n, C.~Yilmaz, and A.~Ulrich, editors, {\em Testing
  Software and Systems}, pages 82--97, Berlin, Heidelberg, 2013. Springer
  Berlin Heidelberg.

\bibitem{GHS19}
M.~Gerhold, A.~Hartmanns, and M.~Stoelinga.
\newblock Model-based testing of stochastically timed systems.
\newblock {\em Innov. Syst. Softw. Eng.}, 15(3-4):207--233, 2019.

\bibitem{GS18}
M.~Gerhold and M.~Stoelinga.
\newblock Model-based testing of probabilistic systems.
\newblock {\em Formal Aspects Comput.}, 30(1):77--106, 2018.

\bibitem{ITV20}
J.~Iqbal, D.~Truscan, and J.~Vain.
\newblock Time semantics of executable activity diagrams for relativized
  conformance testing.
\newblock In {\em Proceedings of the 23rd ACM/IEEE International Conference on
  Model Driven Engineering Languages and Systems: Companion Proceedings},
  MODELS '20, New York, NY, USA, 2020. Association for Computing Machinery.

\bibitem{JT2019}
R.~Janssen and J.~Tretmans.
\newblock Matching implementations to specifications: the corner cases of ioco.
\newblock In C.~Hung and G.~Papadopoulos, editors, {\em Proceedings of the 34th
  ACM/SIGAPP Symposium on Applied Computing (SAC)}, Limassol, Cyprus, pages
  2196--2205. ACM, 2019.

\bibitem{K21}
M.~Krichen.
\newblock Testing timed systems using determinization techniques for one-clock
  timed automata.
\newblock In A.~H. Kacem, S.~Kallel, F.~Belala, M.~Belguidoum, M.~Jmaiel, and
  I.~B. Rodriguez, editors, {\em Proceedings of the Tunisian-Algerian Joint
  Conference on Applied Computing {(TACC} 2021)}, volume 3067 of {\em {CEUR}
  Workshop Proceedings}, pages 62--73. CEUR-WS.org, 2021.

\bibitem{KT2004}
M.~Krichen and S.~Tripakis.
\newblock Black-box conformance testing for real-time systems.
\newblock In S.~Graf and L.~Mounier, editors, {\em Model Checking Software},
  pages 109--126, Berlin, Heidelberg, 2004. Springer Berlin Heidelberg.

\bibitem{LMN2004}
K.~G. Larsen, M.~Miku{\v{c}}ionis, and B.~Nielsen.
\newblock Online testing of real-time systems using {UPPAAL}.
\newblock In J.~Grabowski and B.~Nielsen, editors, {\em Formal Approaches to
  Software Testing, 4th International Workshop, {FATES}}, volume 3395 of {\em
  Lecture Notes in Computer Science}, pages 79--94. Springer, 2004.

\bibitem{LochauPKS14}
M.~Lochau, S.~Peldszus, M.~Kowal, and I.~Schaefer.
\newblock Model-based testing.
\newblock In M.~Bernardo, F.~Damiani, R.~H{\"{a}}hnle, E.~B. Johnsen, and
  I.~Schaefer, editors, {\em Formal Methods for Executable Software Models -
  14th International School on Formal Methods for the Design of Computer,
  Communication, and Software Systems, {SFM} 2014, Bertinoro, Italy, June
  16-20, 2014, Advanced Lectures}, volume 8483 of {\em Lecture Notes in
  Computer Science}, pages 310--342. Springer, 2014.

\bibitem{LGL19}
L.~Luthmann, H.~G\"{o}ttmann, and M.~Lochau.
\newblock Compositional liveness-preserving conformance testing of timed i/o
  automata.
\newblock In {\em Formal Aspects of Component Software: 16th International
  Conference, FACS 2019, Amsterdam, The Netherlands, October 23–25, 2019,
  Proceedings}, page 147–169, Berlin, Heidelberg, 2019. Springer-Verlag.

\bibitem{NS03}
B.~Nielsen and A.~Skou.
\newblock Automated test generation from timed automata.
\newblock {\em Int. J. Softw. Tools Technol. Transf.}, 5(1):59--77, 2003.

\bibitem{N11}
M.~N{\'u}{\~{n}}ez.
\newblock Formal testing of timed and probabilistic systems.
\newblock In B.~Wolff and F.~Za{\"i}di, editors, {\em Testing Software and
  Systems}, pages 9--14, Berlin, Heidelberg, 2011. Springer Berlin Heidelberg.

\bibitem{PH16}
J.~Peleska and W.-l. Huang.
\newblock Industrial-strength model-based testing of safety-critical systems.
\newblock In J.~Fitzgerald, C.~Heitmeyer, S.~Gnesi, and A.~Philippou, editors,
  {\em FM 2016: Formal Methods}, pages 3--22, Cham, 2016. Springer
  International Publishing.

\bibitem{PF99}
E.~Petitjean and H.~Fouchal.
\newblock From timed automata to testable untimed automata.
\newblock {\em IFAC Proceedings Volumes}, 32(1):189--194, 1999.
\newblock 24th IFAC/IFIP Workshop on Real Time Programming WRTP 99, Schloss
  Dagstuhl, Germany, 30 May - 3 June.

\bibitem{SVD01}
J.~Springintveld, F.~W. Vaandrager, and P.~R. D'Argenio.
\newblock Testing timed automata.
\newblock {\em Theor. Comput. Sci.}, 254(1-2):225--257, 2001.

\bibitem{STS2012}
G.~Stokkink, M.~Timmer, and M.~Stoelinga.
\newblock Talking quiescence: a rigorous theory that supports parallel
  composition, action hiding and determinisation.
\newblock {\em Electronic Proceedings in Theoretical Computer Science},
  80:73--87, feb 2012.

\bibitem{STS2013}
W.~G. Stokkink, M.~Timmer, and M.~I. Stoelinga.
\newblock Divergent quiescent transition systems.
\newblock In {\em Tests and Proofs: 7th International Conference, TAP 2013,
  Budapest, Hungary, June 16-20, 2013. Proceedings 7}, pages 214--231.
  Springer, 2013.

\bibitem{T1996}
J.~Tretmans.
\newblock Conformance testing with labelled transition systems: Implementation
  relations and test generation.
\newblock {\em Comput. Networks {ISDN} Syst.}, 29(1):49--79, 1996.

\bibitem{T2008}
J.~Tretmans.
\newblock Model based testing with labelled transition systems.
\newblock In R.~M. Hierons, J.~P. Bowen, and M.~Harman, editors, {\em Formal
  Methods and Testing, An Outcome of the {FORTEST} Network, Revised Selected
  Papers}, volume 4949 of {\em Lecture Notes in Computer Science}, pages 1--38.
  Springer, 2008.

\bibitem{TJ2022}
J.~Tretmans and R.~Janssen.
\newblock Goodbye ioco.
\newblock In N.~Jansen, M.~Stoelinga, and P.~van~den Bos, editors, {\em A
  Journey from Process Algebra via Timed Automata to Model Learning - Essays
  Dedicated to Frits Vaandrager on the Occasion of His 60th Birthday}, volume
  13560 of {\em Lecture Notes in Computer Science}, pages 491--511. Springer,
  2022.

\bibitem{AKY24}
A.~Tvardovskii, K.~El-Fakih, and N.~Yevtushenko.
\newblock Testing and incremental conformance testing of timed state machines.
\newblock {\em Science of Computer Programming}, 233:103053, 2024.

\bibitem{PvdB18}
P.~van~den Bos and M.~Stoelinga.
\newblock Tester versus bug: a generic framework for model-based testing via
  games.
\newblock In {\em 9th International Symposium on Games, Automata, Logics, and
  Formal Verification, GandALF 2018}, pages 118--132. Dagstuhl, 2018.

\bibitem{BosT19}
P.~van~den Bos and J.~Tretmans.
\newblock Coverage-based testing with symbolic transition systems.
\newblock In D.~Beyer and C.~Keller, editors, {\em Tests and Proofs - 13th
  International Conference, TAP@FM 2019, Porto, Portugal, October 9-11, 2019,
  Proceedings}, volume 11823 of {\em Lecture Notes in Computer Science}, pages
  64--82. Springer, 2019.

\end{thebibliography}

\newpage
\appendix
\section{Omitted Proofs}
Below we provide the proofs for the main results in our paper.
\setcounter{lemma}{\getrefnumber{lem:trace-trafo}-1}
\begin{lemma}[Canonic Traces]
    Let $\system=\ltstuple$ be an LTS and $M\in\RRplus$, then:
    \begin{enumerate}
        \item If $\trace\in\Straces(\sys)$, then there is $\ttrace\in\SttracesM(\taification(\sys))$ such that $\project{\ttrace}=\trace$.
        \item If $\ttrace\in\SttracesM(\taification(\sys))$, then there is $\trace\in\Straces(\sys)$ such that $\project{\ttrace}=\trace$.
    \end{enumerate}
\end{lemma}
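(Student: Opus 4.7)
The plan is to prove both parts by a simple induction on the length of the pertinent trace, leveraging the observation that $\taification$ is a purely structural lifting: the set of locations of $\taification(\sys)$ coincides with $S$, every LTS transition $s \transition{a} s'$ becomes a TA transition with guard $\clock < M$ and reset $\{\clock\}$, and each quiescent state additionally acquires a $\delta$-self-loop with guard $\clock = M$. Since the single clock $\clock$ is reset on every visible action and the location invariants are $\clock \le M$, any delay $d \in [0,M)$ suffices to fire a lifted non-$\delta$ transition, while exactly $d = M$ enables a $\delta$-self-loop.

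For part~1, I would induct on $|\trace|$. The base case $\trace = \varepsilon$ takes $\ttrace = \varepsilon$. For the inductive step, split $\trace = \trace' \cdot a$; the IH yields some $\ttrace' \in \SttracesM(\taification(\sys))$ with $\project{\ttrace'} = \trace'$ reaching a location $\location$. If $a \in \Act$, then $\sys$ has a transition $\location \transition{a} \location'$ and $\taification(\sys)$ has its lifted counterpart, so I extend $\ttrace'$ by any $(d,a)$ with $0 \le d < M$ (e.g.\ $d = 0$), observing that the invariant $\clock \le M$ is satisfied both before and after the reset. If $a = \delta$, then $\location$ is quiescent in $\sys$, so $\taification(\sys)$ carries the $\delta$-self-loop at $\location$; extend $\ttrace'$ by $(M, \delta)$.

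For part~2, I would again induct on $|\ttrace|$. After peeling off the last element $(d, a)$, the IH supplies $\trace' = \project{\ttrace'} \in \Straces(\sys)$ reaching some $\location \in S$. By the definition of $\rightarrow_\sysTA$ in~\autoref{def:transformation}, the final step $\location \transition{(d, a)} \location'$ must arise either from a lifted LTS transition (so $\location \transition{a} \location'$ already holds in $\sys$) or from an added $\delta$-self-loop (which forces $a = \delta$ and $\location$ to be quiescent in $\sys$, hence $\location \after \delta \ni \location$ in the LTS semantics). In either case $\trace' \cdot a$ extends $\trace'$ to a suspended trace of $\sys$, giving the required $\trace$.

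The principal technical point to nail down, and hence the main obstacle to a clean write-up, is the equivalence between quiescence of a state in $\sys$ and quiescence of the corresponding location in $\taification(\sys)$. By~\autoref{rem:tioconotation}, a location is quiescent iff no output $(d, \oaction!)$ with $d < M$ is enabled. Because every lifted output transition carries the guard $\clock < M$ and the added $\delta$-self-loop is deliberately excluded from the quiescence check, this reduces exactly to the LTS criterion of having no outgoing output transitions in $\sys$. Once this equivalence is recorded as a small auxiliary observation, both induction arguments proceed without further complication.
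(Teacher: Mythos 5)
Your proposal is correct and takes essentially the same route as the paper: part~1 is the same induction on trace length with the same input/output/$\delta$ case analysis and the same choice of delays ($d<M$ for lifted actions, $d=M$ for the added $\delta$-self-loop), and part~2 rests on the same inversion of the transformation rules (the paper phrases it as a direct construction rather than an induction, which is an immaterial difference). Your explicit auxiliary observation that quiescence of a state in $\sys$ coincides with quiescence of the corresponding location in $\taification(\sys)$ is a detail the paper leaves implicit, and recording it only strengthens the write-up.
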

\begin{proof}

\boxed{Case\ \ref{lem-bullet:LTStoTA}} If $\trace\in\Straces(\sys)$ we need to show that for given $M\in\RRplus$ there exists $\ttrace\in\SttracesM(\taification(\sys))$ such that $\project{\ttrace}=\trace$ (cf.~\autoref{def:projection}). The proof is by induction on the trace length $\lvert\trace\rvert$ of $\trace$. 

\emph{Base case.} Consider the empty trace $\trace=\varepsilon$ with trace length zero, i.e. $\lvert\trace\rvert=0$. In an LTS, the empty trace means that the system stays in the initial state $\startingstate$. With~\autoref{def:transformation} $\taification(\system)$ is a timed automaton with initial location $\location_0$. With no transition taken, the projection of any empty (suspended) timed trace $\ttrace\in\SttracesM(\taification(\system))$ is also empty, i.e. $\project{\ttrace}=\varepsilon=\trace$. Thus, the claim holds for $\lvert\trace\rvert$.

\emph{Induction hypothesis.} Assume that the claim holds for traces of length $n$ for some $n\in\mathbb{N}$. We now show that the claim also holds for traces of length $n+1$.

\emph{Induction Step.}
Assume $\trace'\in\Straces(\sys)$ with $\lvert\trace'\rvert=n+1$.
Thus, there are $\trace\in\Straces(\sys)$ with $\lvert\trace\rvert=n$ and $\action\in\qAct$ such that we may also write $\trace'=\trace\cdot \action$.
By definition of traces (cf.~\autoref{def:lts})---and by trivial extension to suspended traces---there are states $\state_i\in \states$ such that:
$$\startingstate\transition{\action_1}\state_1\transition{\action_2}\ldots\transition{\action_n}\state_n\transition{\action}\state_{n+1}, \text{or equivalently}$$
$$\startingstate\transition{\trace}\state_n\transition{\action}\state_{n+1},  \text{ with } \trace = \action_1 \cdots \action_n.$$

According to the induction hypothesis, there is a (suspended) timed trace $\ttrace\in\SttracesM(\taification(\system))$ with $\project{\ttrace}=\trace$. Moreover we know $\location_n \afterM \ttrace$ as the equivalent location in $\taification(\system)$ to the state $s_n$ (since each $s_i$ identifies $\ell_i$ directly).
From here it suffices to show that for the last transition of the trace in the LTS, i.e. $(\state_n,\action,\state_{n+1})\in\ \trans_{\sys}$ (or $(\state_n,\delta,\state_n)$ resp.), there is a transition in the timed automaton, i.e. $(\location_n,\action,g,\{\clock\},\location_{n+1})\in\ \trans_{\taification(\sys)}$ for some guard $g\in\Phi$ and clock reset $\clock \subseteq\clocks$.
We distinguish the three cases: $\action=\iaction?\in\Acti$, $\action=\oaction!\in\Acto$ and $\delta$:

\begin{description}
    \item[$\boxed{a=\iaction?}$] According to the transformation (cf.~\autoref{def:transformation}) this means that there is a TA transition $(\location_n,\iaction?,c<M,\{\clock\},\location_{n+1})$. Also, in $s_n$ there is either
    \begin{enumerate} 
        \item at least one outgoing output $\oaction!$, or there is
        \item no outgoing  output, in which case the state is quiescent and a $\delta$ is added in the suspension traces. 
        
    \end{enumerate}
    In either case, due to the transformation rules (cf.~\autoref{def:transformation}) there is a clock invariant $\clock\leq M$ in $\location_n$.
    This means that, for all $0\leq d<M$ there is a transition: 
    $$\location_n\transition{(d,\iaction?)}\location_{n+1}$$
    
    %
    \item[$\boxed{a=\oaction!}$] With~\autoref{def:transformation}  there is a TA transition $(\location_n,\oaction!,\clock<M,\{\clock\},\location_{n+1})$. In particular, there is a clock invariant $\clock\leq M$ in $\location_n$. This means that, for all $0\leq d<M$ there is a transition $$\location_n\transition{(d,\oaction!)}\location_{n+1}$$
    \item[$\boxed{\delta}$] With~\autoref{def:transformation} there is a TA transition $(\location_n,\delta,\clock=M,\{\clock\},\location_{n+1})$. In particular, there is a clock invariant $\clock\leq M$ in $\location_n$. This means there is a transition $$\location_n\transition{(M,\delta)}\location_{n+1}$$
\end{description}

In each case (depending on input, output or $\delta)$ we can choose  $0\leq d\leq M$ such that $\ttrace'=\ttrace\cdot(\tim,\action)\in\SttracesM(\taification(\sys))$. Moreover $\project{\ttrace'}=\trace'$ by construction, which concludes the induction.

\boxed{Case\ \ref{lem-bullet:TAtoLTS}} 
The proof is by construction. Let $\ttrace\in\SttracesM(\taification(\sys))$. Then we need to find $\trace\in\Straces(\sys)$ such that $\project{\ttrace}=\trace$.
With~\autoref{def:tanotation} any suspension timed traces can be written as:
$$\location_1 \transition{(d_1,\action_1)}\location_2 \transition{(d_2,\action_2)}\ldots\transition{(d_{n-1},\action_{n-1})}\ell_n$$
In particular, with the definition of the transformation (cf.~\autoref{def:transformation}) each transition in $\trans_{\taification(\sys)}$ originates from an LTS transition, or is an explicitly added $\delta$ self-loop in quiescent states. The TA-ification only extends the original discrete transitions with additional guards and resets.
More precisely, by~\autoref{def:transformation} every transition in $\trans_{\taification(\sys)}$ is one of the three forms for $\action= \iaction?$, $\action= \oaction!$ and $\action= \delta$:
\begin{itemize}
    \item $(\location, \iaction?, c<M, \{\clock\},\location')$ for inputs  $(\state, \iaction?, \state')\in\ \trans_{\sys}$ and $\iaction?\in\Acti$,
    \item $(\location, \oaction!, \clock< M, \{\clock\},\location')$ for outputs  $(\state, \oaction!, \state')\in\ \trans_{\sys}$ and $\oaction!\in\Acto$,
    \item $(\location, \delta, \clock=M, \{\clock\},\location')$ for quiescent states, i.e. $(\state,\delta,\state)$ in suspended traces.
\end{itemize}
In each case, according to~\autoref{def:projection}, the projection $\project{\ttrace}$ removes the information of the transition that is only relevant for TAs and not in LTSs (i.e. delays $d\in\RRnonzero$), and we are left with the suspended trace
$$\trace= \action_1 \action_2 \ldots\action_n\in\Straces(\sys)$$
Consequently, $\project{\ttrace}=\trace$, which concludes the proof.

\qed
\end{proof}
\setcounter{theorem}{\getrefnumber{thm:iocopreservation}-1}
\begin{theorem}
Let $\impl$ be an IOTS and $\spec$ be an LTS. Then:
$$
\impl\ \iocorel\ \spec \Longleftrightarrow  \taification(\impl)\ \tiocorel_M\ \taification(\spec)
$$
for all $M\in\RRplus$.
\end{theorem}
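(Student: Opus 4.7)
The plan is to reduce both implications of the biconditional to an auxiliary correspondence between the untimed and timed ``after-sets'' obtained from Lemma~\ref{lem:trace-trafo}. Concretely, I would first establish the following strengthening of that lemma: for any LTS $\system$ and any $\ttrace \in \SttracesM(\taification(\system))$ with $\project{\ttrace} = \trace$, the set of locations $\taification(\system) \afterM \ttrace$ coincides (under the identification of states with locations from item~1 of Definition~\ref{def:transformation}) with the set of states $\system \after \trace$. The inclusion from TA to LTS follows because by the case analysis in the proof of Lemma~\ref{lem:trace-trafo}, every timed transition $\ell \transition{(d,a)} \ell'$ in $\taification(\system)$ originates either from an LTS transition $s \transition{a} s'$ or from a $\delta$ self-loop at a quiescent state; the converse is essentially the induction step of case~1 of Lemma~\ref{lem:trace-trafo}, which shows that for any LTS transition we can pick a suitable delay $d < M$ (or $d = M$ for $\delta$) to realize it in the TA. This auxiliary fact will be proved by a straightforward induction on the length of $\trace$.

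Given this correspondence, I would next establish the output-set correspondence: $\outM(\taification(\system) \afterM \ttrace) = \{(d,\oaction!) \mid d < M, \oaction! \in \out(\system \after \project{\ttrace}) \cap \Acto\} \cup \{(M,\delta) \mid \delta \in \out(\system \after \project{\ttrace})\}$. This is immediate from Definition~\ref{def:transformation} (each non-$\delta$ output transition gets guard $\clock < M$ with the clock reset on arrival; each quiescent state gains a $\delta$-loop with guard $\clock = M$) together with the state-location identification just established, and Definition~\ref{rem:tioconotation}'s definition of quiescence (which aligns precisely with the LTS notion).

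With these two pieces in place, both directions become almost mechanical. For $(\Rightarrow)$, I take $\ttrace \in \SttracesM(\taification(\spec))$, let $\trace = \project{\ttrace}$, invoke Lemma~\ref{lem:trace-trafo}(\ref{lem-bullet:TAtoLTS}) to place $\trace \in \Straces(\spec)$, apply $\impl\ \ioco\ \spec$ to the trace $\trace$, and then translate the untimed inclusion $\out(\impl \after \trace) \subseteq \out(\spec \after \trace)$ into a timed one via the output-set correspondence. For $(\Leftarrow)$, I take $\trace \in \Straces(\spec)$, use Lemma~\ref{lem:trace-trafo}(\ref{lem-bullet:LTStoTA}) to obtain some $\ttrace \in \SttracesM(\taification(\spec))$ projecting to $\trace$, apply $\tiocoM$ at that $\ttrace$ (noting any $\oaction! \in \out(\impl \after \trace)$ gives a witness $(d,\oaction!)$ with $d < M$ or $d = M$ as appropriate, by the output-set correspondence applied to $\impl$), and pull the inclusion back.

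The main obstacle is bookkeeping rather than a deep argument: one must be careful that the delay $d$ chosen when lifting an LTS-output to the timed setting in $\impl$ is the \emph{same} delay under which the witness in $\spec$ is verified, which is where it matters that any $d < M$ uniformly enables \emph{all} LTS-originating output transitions (because the clock is reset on every visible action and the only guard on such transitions is $\clock < M$), and that $\delta$-witnesses are fixed at $d = M$ on both sides. Once this uniformity is invoked, both directions reduce cleanly to the untimed/timed output-set correspondence and no further timing argument is needed.
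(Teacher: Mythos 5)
Your proposal is correct, but it is organized differently from the paper's proof, and the difference is worth noting. The paper argues directly at the level of individual traces: for the left-to-right direction it extends a timed trace $\ttrace$ of $\taification(\spec)$ by a chosen output $(d,\oaction)$, projects via Lemma~\ref{lem:trace-trafo}, applies $\iocorel$, and then does a case split ($\oaction!\in\Acto$ versus $\delta$) to recover an admissible delay ($d<M$ versus $d=M$); for the right-to-left direction it runs an induction on trace length whose step is a proof by contradiction, again mediated by Lemma~\ref{lem:trace-trafo}. You instead factor both directions through two auxiliary facts the paper never states explicitly: the after-set correspondence $\taification(\system)\afterM\ttrace = \system\after\project{\ttrace}$ (under the state--location identification), and the resulting output-set correspondence $\outM(\taification(\system)\afterM\ttrace)=\{(d,\oaction!)\mid d<M,\ \oaction!\in\out(\system\after\project{\ttrace})\cap\Acto\}\cup\{(M,\delta)\mid \delta\in\out(\system\after\project{\ttrace})\}$. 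Once these are in place both implications become symmetric and mechanical, with no induction or contradiction needed at the top level (the only induction is inside your strengthened lemma, which subsumes Lemma~\ref{lem:trace-trafo}). This buys you something concrete: the paper's backward step, ``there exists $d$ with $(d,\oaction)\in\outM(\taification(\impl)\afterM\ttrace')$ and $(d,\oaction)\notin\outM(\taification(\spec)\afterM\ttrace')$,'' silently relies on exactly the uniform-delay argument you spell out (all non-$\delta$ transitions share the guard $\clock<M$ with the clock reset on every action, and $\delta$ is pinned to $d=M$ on both sides), so your decomposition makes the paper's implicit bookkeeping explicit. Two small additions would complete your argument: when applying the correspondences to $\impl$, note that $\oaction\in\out(\impl\after\trace)$ forces $\impl\after\trace\neq\emptyset$ and that the delays of a $\ttrace\in\SttracesM(\taification(\spec))$ are automatically admissible in $\taification(\impl)$ (same guard scheme, same $M$), so $\ttrace\in\SttracesM(\taification(\impl))$ whenever $\trace\in\Straces(\impl)$; and for the left-to-right conclusion you must still invoke Corollary~\ref{cor:IOTA} to ensure $\taification(\impl)$ is an IOTA, since $\tiocoM$ is only defined for IOTA implementations.
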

%
\begin{proof}
    %
    \boxed{\implies}
    Let $\impl$ be an IOTS, $\spec$ be an LTS and assume that $\impl\ \iocorel\ \spec$. 
    %
    Let $M\in\RRplus$, then we need to show that $\taification(\impl)\ \tiocorel\ \taification(\spec)$.
    %
    According to~\autoref{def:tioco}, we need to show that for all suspension timed traces of the specification $\ttrace \in \SttracesM(\taification(\spec))$ it holds that
    $$
     \outM(\taification(\impl)\afterM \ttrace) \subseteq \outM(\taification(\spec)\afterM \ttrace)
    $$
    %
    Thus, let $\ttrace \in \SttracesM(\taification(\spec))$.
    %
    If $\ttrace\notin\SttracesM(\taification(\impl))$ the inclusion is trivial because then $\outM(\taification(\impl)\afterM \ttrace)=\varnothing$. Otherwise, pick $(\tim,\oaction)\in \outM(\taification(\impl)\afterM \ttrace)$ with $d\in\RRplus$ and $\oaction\in\olabsd$. 
    Then $\ttrace\!\cdot\!(d,\oaction)=\ttrace'$ and consequently $\ttrace'\in\Sttraces(\taification(\impl))$.
    %
    It remains to be shown that $\ttrace'\in\SttracesM(\taification(\spec))$ as this guarantees that $(d,\oaction)\in\outM(\taification(\spec)\afterM \ttrace)$.
    
    %
    With~\autoref{lem:trace-trafo} and the definition of the transformation (cf.~\autoref{def:transformation}), there is an untimed trace $\trace\in\Straces(\impl)$ such that $\project{\ttrace}=\trace$.
    By the same argument we can derive $\trace'\in\Straces(\impl)$ from $\ttrace'$, i.e.
    %
    with $\oaction\in\Actoq$ and $\trace'=\trace\!\cdot\!\oaction$ we again use the transformation (cf.~\autoref{def:transformation}) and~\autoref{lem:trace-trafo} to conclude that $\trace'=\trace\cdot \oaction\in\Straces(\impl)$.    
    %
    Based on the premise $\impl\ \iocorel\ \spec$ we know that
    $$
    \forall\ \xi \in \Straces(\spec):
    \out(\impl\after \xi) \subseteq \out(\spec\after \xi)
    $$
    and thus with $\oaction\in\out(\impl\ \after \trace)\subseteq\out(\spec\ \after \trace)$ also $\trace'\in\Straces(\spec)$.
    
    %
    We are left to determine $d\in\RRnonzero$.
    %
    For that we distinguish two cases: $\oaction=\delta$ and $\oaction\in\Acto$. 
    In the last transition of $\ttrace'$ there are $\location,\location'\in\locations$ such that $(\location,\oaction,\phi,\calC,\location')$ for some guard $\phi$ and clock resets $\calC$.
    
    %
    \begin{itemize}
    \item Assume $\oaction=\delta$.  
    By~\autoref{def:transformation} this implies both $I(\location)=\{\clock\leq M\}$ and $\phi=(\clock=M)$.
    Thus, the only available trace in $\SttracesM(\taification(\impl))$ requires $d=M$.
    With~\autoref{rem:tioconotation} this implies $(M,\delta)\in\outM(\taification(\impl) \afterM \ttrace)$.
    We conclude $(M,\delta)\in\outM(\taification(\spec) \afterM \ttrace)$.
    %
    \item The case of $\oaction!\in\Acto$ proceeds analogously, albeit with $\phi=(\clock<M)$ and some $\tim<M$, again via the transformation rules of~\autoref{def:transformation}.
    \end{itemize}
    
    %
    Summarizing both cases we conclude that there is a duration $d\leq M$ such that $(d,\oaction)\in\outM(\taification(\spec) \afterM \ttrace)$.
    Since $M\in\RRplus$ was arbitrary but fixed, this shows that for all $\ttrace\in\SttracesM(\taification(\spec))$:
    $$
    \outM(\taification(\impl) \afterM \ttrace)\subseteq \outM(\taification(\spec)\ \afterM \ttrace).
    $$
    In addition to this, we know via~\autoref{cor:IOTA}  that $\taification(\impl)$ is an IOTA, which finally lets us conclude $\taification(\impl)\ \tiocorel_M\ \taification(\spec)$.

    \boxed{\Longleftarrow} Assume that $\taification(\impl)\ \tiocorel_M\ \taification(\spec)$ with $M\in\RRplus$. 
    This means that for all $\ttrace \in \SttracesM(\taification(\spec))$ and for all $(d, \oaction) \in \outM(\taification(\impl) \afterM \ttrace)$ we know that $(d, \oaction) \in \outM(\taification(\spec) \afterM \ttrace)$.
    
    To show that $\impl\ \iocorel\ \spec$ according to~\autoref{def:ioco}, we need to show that for all suspension traces of the specification $\trace \in \Straces(\spec)$ it holds that
    $$\out(\impl \after \trace) \subseteq \out(\spec \after \trace).$$
    For that we will use induction.
    
    \emph{Base case.} Let $\trace \in \Straces(\spec)$. If $\trace = \varepsilon$ with need to prove that for all $\oaction \in \out(\impl \after \varepsilon)$ it holds that $\oaction \in \out(\spec \after \varepsilon)$. With~\autoref{lem:trace-trafo} and $\taification(\impl)\ \tiocorel_M\ \taification(\spec)$ this follows directly since $\project{\varepsilon}=\varepsilon$. 

    \emph{Induction hypothesis.} Assume that for all timed traces $\ttrace$ with length $n$ it holds that if $\oaction! \in \out(\impl \after \trace)$ then $\oaction! \in \out(\spec \after \trace)$. We continue to prove that for all timed traces $\trace' = \action \cdot \trace$ with length $n+1$ it holds that if $\oaction! \in \out(\impl \after \trace')$ then $\oaction! \in \out(\spec \after \trace')$.

    \emph{Induction step.} Suppose there is an output such that $\oaction! \in \out(\impl \after \trace')$ and $\oaction! \not\in \out(\spec \after \trace')$. With~\autoref{lem:trace-trafo} this means that exists a $d\in\RRnonzero$ such that $(d, \oaction) \in \outM(\taification(\impl) \afterM \ttrace')$ and $(d, \oaction) \not\in \outM(\taification(\spec) \afterM \ttrace')$ where $\project{\ttrace'}=\trace'$. However, this cannot happen because  $\taification(\impl)\ \tiocorel_M\ \taification(\spec)$. This implies for all output such that $\oaction! \in \out(\impl \after \trace')$ we know $\oaction! \in \out(\spec \after \trace')$, which concludes the induction and the proof. 
    \qed
\end{proof}
\setcounter{theorem}{\getrefnumber{thm:testcorrespondence}-1}
\begin{theorem}[Test Correspondence]
    Let $\spec$ be an LTS and let $\atestLTS(\spec)$ be the set of all its tests according to Definition~\ref{def:ltstest}. Similarly, let $\atestTA(\taification(\spec))$ be the set of all tests for $\taification(\spec)$ which is a TA according to Definition~\ref{def:tatest}. Then, for all $M\in\RRplus$:
    $$\taification(\atestLTS(\spec)) = \atestTA(\taification(\spec))$$
\end{theorem}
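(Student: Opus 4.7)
I prove the set equality by establishing the two inclusions separately. For $\taification(\atestLTS(\spec)) \subseteq \atestTA(\taification(\spec))$ I fix an arbitrary $\test \in \atestLTS(\spec)$ and verify each clause of Definition~\ref{def:tatest} for $\taification(\test)$ with respect to $\taification(\spec)$. The transformation is purely decorative: it introduces a single clock $c$, attaches the invariant $c \le M$ to every non-terminal location, and decorates visible transitions with guard $c<M$ and $\delta$-transitions with guard $c=M$, always resetting $\{c\}$. Consequently the underlying graph is unchanged, and tree-shapedness, finiteness of traces, determinism, absence of outgoing transitions at \pass{} and \fail{}, and the enabledness clauses (either $|\inp(\ell)|=0 \wedge \out(\ell)=\Actoq$ or $|\inp(\ell)|=1 \wedge \out(\ell)=\Acto$) all transfer verbatim from the LTS clauses. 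The clock-related clauses are immediate from Definition~\ref{def:transformation}. The trace-level clauses (input-specifiedness, soundness, correctness) reduce to their untimed counterparts by the following template: project the timed trace via Definition~\ref{def:projection}, invoke the LTS-level clause on $\test$ and $\spec$, and lift back to the timed setting via Lemma~\ref{lem:trace-trafo}(1); canonicity of $\taification(\spec)$ combined with determinism of the test then pins down the witness timed trace uniquely.

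For the reverse inclusion $\atestTA(\taification(\spec)) \subseteq \taification(\atestLTS(\spec))$ I construct the untimed witness explicitly. Given $\testTA \in \atestTA(\taification(\spec))$, Definition~\ref{def:tatest} requires $\testTA$ to be canonic for some LTS, so applying the projection of Definition~\ref{def:projection} yields a candidate LTS test $\test$. First I verify $\test \in \atestLTS(\spec)$ by checking each clause of Definition~\ref{def:ltstest}: the structural and enabledness clauses are preserved by the projection because it only strips clocks, guards, invariants, and time-stamps; the three trace-level clauses follow from their timed counterparts via Lemma~\ref{lem:trace-trafo}(2). Second I verify $\taification(\test)=\testTA$: because $\testTA$ is canonic and therefore uses exactly the single clock $c$, the invariant $c\le M$ on every non-terminal location, the guards $c<M$ and $c=M$ dictated by the action kind, and the reset $\{c\}$ on every edge, the TA reassembled from $\test$ by Definition~\ref{def:transformation} coincides edge-by-edge with $\testTA$.

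\textbf{Main obstacle.} The subtle point is the treatment of $\delta$-edges inside test cases. Unlike a plain specification, where Definition~\ref{def:transformation} introduces $\delta$ only as a self-loop on quiescent locations, a test case uses $\delta$ to move from an observation location to a verdict location, and such observation states are typically \emph{not} quiescent in the LTS sense because they also enable all outputs in $\Acto$. The proof therefore needs the reading of $\taification$ on test trees in which every $\delta$-edge, not only quiescent self-loops, receives the guard $c=M$ and reset $\{c\}$; this is consistent with the canonic TA test specification in Definition~\ref{def:tatest}. Once this convention is fixed, the projection of Definition~\ref{def:projection} forgets exactly what $\taification$ adds on such canonic objects, so the two operations are mutually inverse, and the bidirectional bookkeeping between LTS clauses and TA clauses collapses into straightforward applications of Lemma~\ref{lem:trace-trafo}.
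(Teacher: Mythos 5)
Your proposal follows essentially the same route as the paper's proof: both inclusions are established by clause-by-clause verification of the test-case definitions, with $\taification$ acting as a structure-preserving decoration (invariant $\clock\le M$, guards $\clock<M$ and $\clock=M$, reset $\{\clock\}$) for the forward direction, and the projection of Definition~\ref{def:projection} supplying the untimed witness together with the check $\taification(\project{\testTA})=\testTA$ for the reverse, using Lemma~\ref{lem:trace-trafo} for the trace-level clauses. Your observation about $\delta$-edges of test trees---which lead to verdict locations from non-quiescent observation states and hence are not literally covered by the two clauses of Definition~\ref{def:transformation}---identifies a real subtlety that the paper's proof passes over silently by treating $\taification$ as relabelling every $\delta$-transition with guard $\clock=M$; the convention you fix is exactly the reading the paper implicitly assumes.
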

\begin{proof}
Let $M\in\RRplus$. The proof is done in two steps:
\item[$\boxed{\taification(\atestLTS(\spec)) \subseteq \atestTA(\taification(\spec))}$]
Let $\test\in\atestLTS(\spec)$ be an LTS test case for $\spec$ in correspondence with~\autoref{def:ltstest}. We must show that $\exists\ \testTA \in \atestTA(\taification(\spec)): \taification(\test)=\testTA$ satisfies every clause of the TA-test definition (\autoref{def:tatest}).
We go step-by-step below:
\begin{description}

    \item[Structure.] The transformation $\taification$ performs a structure-preserving relabelling (cf.~\autoref{def:transformation},~\autoref{lem:trace-trafo}). It keeps the graph structure, i.e. nodes, quiescent states and transitions, adds one clock and decorates transitions and locations. Therefore, $\pass/\fail$ nodes, the tree shape and determinism are preserved. 
    
    \item[Invariants.] We show that for all locations $\ell\in L$ except $\pass$ and $\fail$
    it holds that 
    $\Phi_L(\ell)=\{\clock\leq M\}$. By~\autoref{def:ltstest} $\test$ either has exactly one outgoing input transition or exactly no  outgoing input transition, i.e.
    $$\forall\ \ell \in \locations: (|\inp(\ell)|\!=\!0 \wedge \out(\ell)\!=\!\Acto \cup \{\delta\}) \vee (\out(\ell)\!=\!\Acto \wedge |\inp(\ell)|\!=\!1)$$
    Irrespective of $\out(\ell)\!=\!\Acto\cup\{\delta\}$ or $\out(\ell)\!=\!\Acto$, according to the transformation (\autoref{def:transformation}) $\ell$ has the invariant $\clock\leq M$.
    \item[Guards.] We show that the guard sets on transitions of $\testTA$ conform to the ones required of~\autoref{def:tatest} by making a distinction between inputs, non-$\delta$ outputs and $\delta$ outputs. According to ~\autoref{def:transformation} :
        \begin{itemize}  
        \item ... every input transition in $\testTA$ is of the form $\langle\ell,\iaction?,c<M,\{c\},\ell'\rangle$,
    
        \item ... every non-$\delta$ output transition in $\testTA$ is of the form $\langle\ell,\oaction!,c < M,\{c\},\ell'\rangle$,
    
        \item ... every $\delta$ output transition in $\testTA$ is of the form $\langle\ell,\delta,c=M,\{c\},\ell'\rangle$.
        \end{itemize}
        In all three cases this is exactly what is required in~\autoref{def:tatest}. This means that all guard sets on transitions of $\testTA$ conform to~\autoref{def:tatest}.
    \item[Input-specifiedness, soundness, correctness.] Similar to the preservation in structure, $\taification$ neither adds nor deletes traces (cf.~\autoref{lem:trace-trafo}), so the three properties can be lifted unchanged.
\end{description}
All properties combined yield $\taification(\test)=\testTA$. 
    
\item[$\boxed{\taification(\atestLTS(\spec)) \supseteq \atestTA(\taification(\spec))}$]  
Let $\testTA\in\atestTA(\taification(\spec))$. We need to show that there is a test $\test\in\atestLTS(\spec)$, such that $\testTA=\taification(\test)$.
For that, we consider the projection of the test case $\testTA$ (cf.~\autoref{def:transformation}) as a candidate, i.e. we consider $\project{\testTA}$ and show that it is the test case we are looking for.
Clearly $\project{\test}=\langle \locations,\Actq,\rightarrow',\ell_0\rangle$ where 
$$\rightarrow' = \{(\ell, \action, \ell')\in L\times\Actq\times L\mid (\ell,\action,\phi,\{\clock\},\ell'\in\rightarrow_{\testTA})\}.$$
%
The projection is trace preserving and does not add new behaviour (cf.~\autoref{lem:trace-trafo}). Specifically, $\delta$ labels are only explicitly added in quiescent states during the transformation, meaning that their LTS counterpart has a corresponding $\delta$ in the suspension traces. The same holds when we apply the transformation of~\autoref{def:transformation}.
Structurally, timed tests are conservative extensions of regular LTS test cases (cf.~\autoref{def:ltstest} and~\autoref{def:tatest}), hence we conclude $\test=\project{\testTA}\in\testLTS$.
What remains to be shown is $\testTA=\taification(\test)$. For that we show that the location invariants and the transition guards are the same.
\begin{description}
    \item[Invariants.] According to~\autoref{def:transformation} all locations have the clock invariant $\Phi=\{\clock\leq M\}$. The same is true for timed test cases (cf.~\autoref{def:tatest}).
    \item[Guards.] Like before, the case distinction is between inputs, non-$\delta$ outputs and $\delta$ outputs. We observe that the transition guards in both~\autoref{def:transformation} and~\autoref{def:tatest} are
    \begin{itemize}
        \item $\phi=(\clock < M)$ for inputs
        \item $\phi=(\clock < M)$ for non-$\delta$ outputs and
        \item $\phi=(\clock = M)$ for $\delta$ outputs
    \end{itemize}
    This implies that the transition guards after transformation $\taification$ and those of timed test cases in $\atestTA$ are the same.
\end{description}
Ultimately this yields $\testTA=\taification(\project{\testTA})$, and with it $\testTA\subseteq\taification(\testLTS(\spec))$. 

\qed
\end{proof}
\setcounter{theorem}{\getrefnumber{thm:testsoundness}-1}
\begin{theorem}
Let $M\in\RRplus$, $\impl$ be an IOTS and $\spec$ be an LTS. 
Let $\atestLTS(\spec)$ and $\atestTA(\taification(\spec))$ be the set of all annotated tests for $\spec$ and $\taification(\spec)$, respectively. Then:
\begin{enumerate}\itemsep0mm
    \item
    If $\impl\ \text{passes } \atestLTS(\spec)$, then $\taification(\impl)\ \text{passes } \atestTA(\taification(\spec))$

    \item
    If $ \impl\ \text{fails } \atestLTS(\spec)$, then $\taification(\impl)\ \text{fails } \atestTA(\taification(\spec)).$ 

\end{enumerate}
\end{theorem}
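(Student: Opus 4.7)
The plan is to derive both statements by combining Theorem~\ref{thm:testcorrespondence} (test correspondence) with Lemma~\ref{lem:trace-trafo} (canonic traces). Both parts hinge on the observation that $\taification$ preserves the underlying discrete graph of a test case---so the distinguished nodes $\pass$ and $\fail$ are literally the same in $\test$ and in $\taification(\test)$---and that it preserves executable traces up to projection of the time stamps. Part (\ref{thm:1-sound}) I argue by contraposition; part (\ref{thm:2-comp}) I argue directly.

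For part (\ref{thm:1-sound}), suppose $\taification(\impl)$ does \emph{not} pass $\atestTA(\taification(\spec))$. Then there is a timed test $\testTA\in\atestTA(\taification(\spec))$ and a timed trace $\ttrace\in\SttracesM(\taification(\impl))\cap\ttraces(\testTA)$ with $\testTA\transition{\ttrace}\fail$. By Theorem~\ref{thm:testcorrespondence}, there is $\test\in\atestLTS(\spec)$ with $\taification(\test)=\testTA$. Lemma~\ref{lem:trace-trafo}(\ref{lem-bullet:TAtoLTS}) applied to $\impl$ yields $\trace=\project{\ttrace}\in\Straces(\impl)$; the same lemma applied to $\test$ (whose transition graph is identical to that of $\testTA$ up to time decorations) yields $\trace\in\traces(\test)$ with $\test\transition{\trace}\fail$, because the $\fail$-node of $\testTA$ is by construction of $\taification$ literally the $\fail$-node of $\test$. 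Hence $\impl$ fails $\test\in\atestLTS(\spec)$, contradicting the hypothesis.

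For part (\ref{thm:2-comp}), assume $\impl$ fails some $\test\in\atestLTS(\spec)$ via $\trace\in\Straces(\impl)\cap\traces(\test)$ with $\test\transition{\trace}\fail$. Lemma~\ref{lem:trace-trafo}(\ref{lem-bullet:LTStoTA}) applied to $\impl$ furnishes a timed trace $\ttrace\in\SttracesM(\taification(\impl))$ with $\project{\ttrace}=\trace$, and applied to $\test$ it gives $\ttrace\in\ttraces(\taification(\test))$ with $\taification(\test)\transition{\ttrace}\fail$. Theorem~\ref{thm:testcorrespondence} places $\taification(\test)$ into $\atestTA(\taification(\spec))$, so $\taification(\impl)$ fails $\atestTA(\taification(\spec))$.

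The only delicate point---and hence the main obstacle---is to verify that the two directions of Lemma~\ref{lem:trace-trafo} can be applied \emph{compatibly} to the implementation and to the test, producing the \emph{same} timed witness $\ttrace$ (resp.\ untimed witness $\trace$) that reaches $\fail$ in both TA-ifications. This matters because a quiescence step, which in the LTS encoding is a self-loop at any quiescent state and carries no temporal content, is rigidly fixed to $(M,\delta)$ in the TA encoding. Since $\taification$ introduces exactly one $\delta$-transition per quiescent location, with guard $\clock=M$, and no other $\delta$-transitions, the lift of a suspension trace from the LTS to its canonic TA is uniquely determined on its $\delta$-letters and can therefore be chosen identically on $\taification(\impl)$ and $\taification(\test)$; with this consistency in hand the reached $\fail$-node coincides and the rest of the argument reduces to bookkeeping on verdict sets.
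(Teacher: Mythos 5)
Your proof follows essentially the same route as the paper's: part~1 by contraposition and part~2 directly, in both cases combining Theorem~\ref{thm:testcorrespondence} with Lemma~\ref{lem:trace-trafo} and the structure preservation of $\taification$ to transfer the failing witness between the untimed and timed settings. Your closing remark on picking a \emph{common} timed witness (the $\delta$-steps being forced to $(M,\delta)$ and the remaining delays $d<M$ being freely and consistently choosable in both $\taification(\impl)$ and $\taification(\test)$) makes explicit a compatibility step that the paper's proof leaves implicit, so it is a refinement of, not a departure from, the published argument.
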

    \begin{proof}
    Let $\impl,\spec,M,\atestLTS(\spec)$ and $\atestTA(\taification(\spec))$ be as specified.
    
    \begin{itemize}
        \item[\boxed{\ref{thm:1-sound}}] 
        The proof is via contraposition, i.e. we prove if $\taification(\impl)\ \text{fails } \atestTA(\taification(\spec))$, then it follows that $\impl\ \text{fails } \atestLTS(\spec)$. 
        
        Thus, assume $\taification(\impl)\ \text{fails } \atestTA(\taification(\spec))$. 
        By~\autoref{def:tatestverdict} this implies that there is a timed test case $\testTA\in\atestTA(\taification(\spec))$ such that there is a (suspended) timed trace that leads to a fail state, i.e.
        $$ \exists\ \ttrace\in\SttracesM(\taification(\impl))\cap\ttracesM(\testTA):\testTA\vva{\ttrace} \fail$$
        For this timed trace $\ttrace$ let $\trace=\project{\ttrace}$ be its projected trace (cf.~\autoref{def:projection}).
        From here we deduce two properties: 
        \begin{enumerate}
            \item Since $\ttrace\in\SttracesM(\taification(\impl))$ we have $\trace\in\Straces(\impl)$ (cf.~\autoref{lem:trace-trafo}), and 
            \item By the definitions of test cases (cf.~\autoref{def:ltstest}) and timed test cases (cf.~\autoref{def:tatest}) we know that this trace leads to a fail state in the untimed test $\test=\project{\testTA}$, i.e. $\sigma\in\traces(\test) \text{ and } \test\vva{\sigma}\fail. $
        \end{enumerate}
        Moreover, with~\autoref{thm:testcorrespondence}, $\project{\testTA}\in\atestLTS$. This means we found an untimed test case $\test\in\atestLTS(\spec)$ that contains a (suspended) trace $\sigma$ which leads $\impl$ to a fail state. By~\autoref{def:ltstest} this means that $\impl\ \text{fails } \atestLTS$.
        \item[\boxed{\ref{thm:2-comp}}] 
        Assume $\impl\ \text{fails } \atestLTS$, then according to~\autoref{def:ltstestverdict} there is $\atest\in\atestLTS(\spec)$ for which there is $\trace\in\Straces(\impl)\cap\traces(\test)$ with $\test\vva{\trace}\fail$.
        
        With~\autoref{thm:testcorrespondence} we know that $\taification(\test)\in\atestTA(\taification(\spec))$. Likewise, with~\autoref{lem:trace-trafo} we know there is $\ttrace\in\SttracesM(\taification(\impl))\cap\ttraces(\taification(\test))$ such that $\project{\ttrace}=\trace$.

        Notably, with the definition of the transformation (cf.~\autoref{def:transformation}) this means we found $$\ttrace\in\SttracesM(\taification(\impl))\cap\ttraces(\testTA):\testTA\vva{\ttrace}\fail.$$
        
        With~\autoref{def:tatestverdict} this means $\taification(\impl)\ \text{fails } \testTA$ and with it $\taification(\impl)$ fails $\atestTA(\taification(\spec))$.

    \end{itemize}
    \qed
\end{proof}



\end{document}